\newtheorem{theorem}{Theorem}
\newtheorem{definition}{Definition}
\newtheorem{lemma}{Lemma}
\newtheorem{remark}{Remark}
\title{Simultaneous Velocity and Position Estimation via Distance-only Measurements with Application to Multi-Agent System Control}
\author{Bomin Jiang, Mohammad Deghat, \emph{Member, IEEE} and \\
Brian D.O. Anderson, \emph{Life Fellow, IEEE}
\thanks {*This work is supported by National ICT Australia, which is funded by the Australian Research Council through the ICT Centre of Excellence program.}
\thanks{Bomin Jiang and Brian D. O. Anderson are with Research School of Engineering, The Australian National Universityand National ICT Australia. Mohammad Deghat is with the School of Engineering and Information Technology, the University of New South Wales and used to be with the ANU and NICTA. \tt\small{(e-mail: \{U5225976,Brian.Anderson\}@anu.edu.au}, \tt\small{m.deghat@unsw.edu.au)}}
}
\begin{document}
\maketitle
\thispagestyle{empty}
\pagestyle{empty}

\begin{abstract}
This paper proposes a strategy to estimate the velocity and position of neighbor agents using distance measurements only. Since with agents executing arbitrary motions, instantaneous distance-only measurements cannot provide enough information for our objectives, we postulate that agents engage in a combination of circular motion and linear motion. The proposed estimator can be used to develop control algorithms where only distance measurements are available to each agent. As an example, we show how this estimation method can be used to control the formation shape and velocity of the agents in a multi agent system. Simulation results are provided to illustrate the performance of the proposed algorithm.
\end{abstract}

\section{Introduction}

The performance of multi-agent systems in various tasks, e.g. consensus \cite{ren2007information,olfati2007consensus}, formation shape control \cite{anderson2008rigid,ren2010distributed}, cooperative geolocalization \cite{wymeersch2009cooperative}, etc. has been studied with increasing intensity over recent years.
These tasks are usually required to be performed in a decentralised way \cite{foderaro2010necessary} and using limited information, i.e. each agent should individually identify possible actions, and while such actions are required to achieve the final goal of the formation, each agent can communicate only with its neighboring agents.
Examples of these tasks are retrieving information from an area covered by a sensor network (where the agents are sensors deployed in the area), or moving together in a desired formation shape from one point to another where the agents are ground or aerial vehicles.

In a formation control problem, which is the focus of this paper, each agent tries to contribute to achieve the global goal of the formation using measurements of, typically, relative position and velocity of its neighbors. Examples of such problems are given in \cite{anderson2011range,Krick,Summers11,Dimos12}.
These problems become more challenging when the agents cannot instantaneously measure all the information required to apply motion corrections to achieve the final goal of the formation and have to estimate some of this information using their measurements.

An example of such a challenging problem is given in \cite{cao2011formation}, where a formation (shape and translation motion) control method, called stop-and-go, has been devised to control the agents not able to measure the relative positions (both distance and angle) of their neighbors, but only able to measure the distances to their respective neighbors. This measurement restriction makes the control problem significantly harder. Beyond that, this paper makes an assumption that the formation has a leader agent whose velocity is constant, and the followers take up positions while moving with the same velocity as the leader.

This paper treats a related problem. Agents are required to estimate the relative position and velocity of their neighbor agents using only distance measurements to the neighbors, and achieve both velocity consensus and formation shape control.
The key is to postulate that the motion of each agent comprises two parts: a translation and a circular motion. The circular motion is around a moving center, and it is the centers of each agent's motion, rather than the agents themselves, which achieve velocity consensus.
The purpose of the superimposed circular motion is to allow inter-agent localization and velocity estimation, not using instantaneous measurements, but using distance measurements collected over an interval. We postulate that neighbor agents remain in communication even if they initially have different velocities.

The notion of using deliberate motions of agents to assist in localization was suggested in \cite{zhu2011sensor}, in relation to sensor network localization. The idea in \cite{zhu2011sensor} is
that if each node in a sensor network moves in a small neighbourhood of its original position, it is possible to infer direction information from distance measurements.
Our idea is similar; however, the motions in \cite{zhu2011sensor} are random while this paper studies the localization problem using distance-only measurements when agents are executing independent circular motions and it further discusses the situation where agents are performing a combination of circular motion and linear motion, with the linear motion components required to achieve velocity consensus.
In addition, the idea of introducing sinusoidal perturbation in formation control problems is not wholly novel: in \cite{tian2013global}, the authors have introduced sinusoidal perturbations to the usual gradient based control algorithm in order to achieve a different objective.
An advantage of having a combination of linear and circular motion over only linear motion as in \cite{cao2011formation} is that the agents are less likely to travel out of communication range during the localization process.


An abbreviated conference version of this paper has been presented in \cite{6760298}.
The novel contributions of the paper, in comparison to the conference paper \cite{6760298}, are as follows (a) proposing a discrete time control algorithm to achieve velocity consensus and simultaneous formation shape control, which can be used when distance-only measurements are available, and (b) introducing an improvement by adaptively adjusting the circular motion radius.

The rest of this paper is organised as follows. Section~\ref{sec:estimation_translational} gives a solution to the location and velocity estimation problem using distance-only measurements when each agent is executing a combination of linear motion and circular motions. Section~\ref{sec:adaptive} discusses an improvement of the algorithm derived in the previous section involving an adaptively adjusting the circular motion radius. Section~\ref{sec:control} discusses a discrete time control algorithm to achieve velocity consensus and formation shape control with distance-only measurements. Simulation is included in each section. Concluding remarks and directions for future research are given in Section~\ref{sec:Conclusion}.

\section{Relative position and velocity estimation using sinusoidal perturbation}
\label{sec:estimation_translational}

In the conference version of this paper \cite{6760298}, we gave detailed explanation on how to infer neighbouring agents' relative position and velocity. We have also discussed special cases very carefully. Here we just give a brief introduction to the ideas.



\subsection{Problem statement}
\label{sec:problemStatement}
Consider two point agents, 1 and 2. Each agent performs a combination of circular and rectilinear motion, so each has a certain radius, direction and angular velocity for the circular motion and velocity for the rectilinear motion. Agent 1 knows its own radius, angular velocity and the translational velocity of its circle centre and can only measure (continuously) the distance but not bearing of agent 2. Conversely, agent 2 knows its radius, angular velocity and the velocity of its circle centre and can only measure the distance of agent 1. The goal is for both agents to localize and sense the velocities of each other for velocity consensus purposes.

\begin{figure}[h]
\begin{center}
\includegraphics[width=5.5cm]{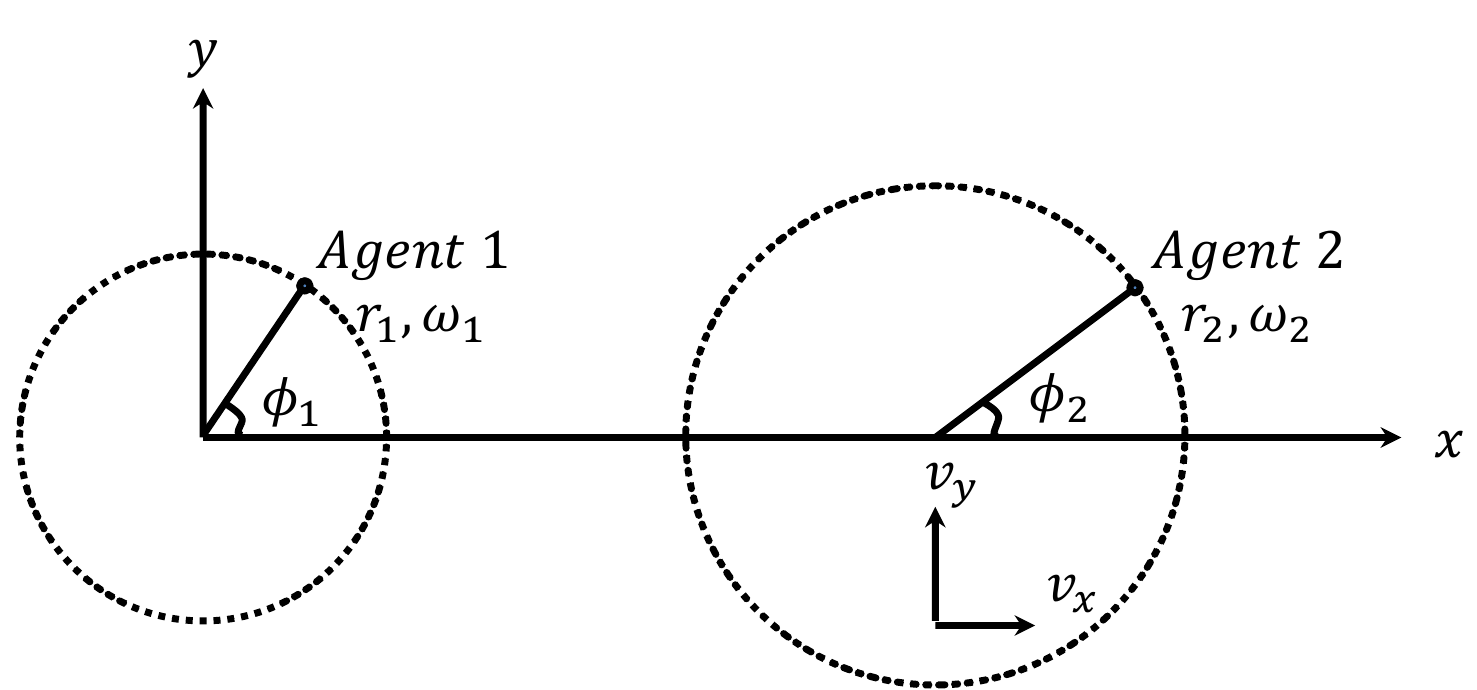}
\end{center}
\caption{Set up a coordinate system with respect to agent 1's circle centre}
\label{velocity consensus 1}
\end{figure}

As shown in Fig.~\ref{velocity consensus 1}, we set up (for analysis purposes by us) a global coordinate system with origin at agent 1's circle centre and agent 2's circle centre on the $x$ axis when $t=0$. Suppose $r_i$ is the radius of agent $i$'s motion, $\omega_i$ is the angular velocity of agent $i$, $z(t)$ is the distance at time $t$ between agent 1 and 2 and $d$ is the distance between the two circle centres. The coordinate system is defined by the agent pair, and is used for analysis purposes by us. Its orientation with respect to agent 1's local coordinate basis is not known by agent 1 at this stage though the orientation can be obtained after that agent learns $\phi_1$.
In addition, let $v_{c_i}$ be the velocity of agent $i$'s circle centre, $v_{ji}$ be the relative velocity of agent $j$'s circle centre with respect to agent $i$'s circle centre, $v_x$ be the $x$ component of the velocity $v_{21}$, and $v_y$ be the $y$ component of the velocity $v_{21}$.
The positive direction of angular velocities is counter-clockwise.

We assume in this paper that $v_x$ and $v_y$ are constant for $kT<t<(k+1)T$, $T>0$, $k=0,1,2,\cdots$ and may only change at time instants $kT$, perhaps reflecting a discrete-time consensus algorithm. We explain later how to choose $T$.

There holds
\begin{equation}\label{velocity equation}
\begin{split}
z^2(t)=&[d+v_xt+r_2cos({\omega_2t+\phi_2})-r_1cos {(\omega_1t+\phi_1)}]^2\\
&+[v_yt+r_2sin{(\omega_2t+\phi_2)}-r_1sin{(\omega_1t+\phi_1)}]^2\\
\end{split}
\end{equation}

Let $d_x=d+v_xt$ and $d_y=v_yt$ and rewrite \eqref{velocity equation} using easy algebra as:
\begin{equation}\label{rewrite}
\begin{split}
z^2(t)=&(d_x^2+d_y^2+r_1^2+r_2^2)\\
&+2d_xr_2cos({\omega_2t+\phi_2})+2d_yr_2sin({\omega_2t+\phi_2})\\
&-2d_xr_1cos {(\omega_1t+\phi_1)}-2d_yr_1sin {(\omega_1t+\phi_1)}\\
&-2r_1r_2cos[(\omega_1-\omega_2)t+(\phi_1-\phi_2)]\\
\end{split}
\end{equation}

\subsection{Finding relative position and velocity of a neighbour}

In the system comprising a pair of agents 1 and 2, without loss of generality, we only show how agent 1 can localize and estimate the relative velocity of agent 2. The first step is for agent 1 to identify the angular velocity of agent 2, using a Fourier representation of $z^2(t)$ for $t\in[0,T]$. When $\|v_{21}\|$ is sufficient small in comparison to $r_1$, $r_2$ and $d$, four distinct peaks will show up at 0, $|\omega_1|$, $|\omega_2|$ and $|\omega_1-\omega_2|$ in frequency domain. This allows agent 1 to pick up the angular velocity of agent 2. More insights about this assumption will be discussed in Section \ref{sec:adaptive}; detailed explanation about how to identify the angular velocity is given in \cite{6760298}.

In order to identify the value of $d$, $\phi_1$, $v_x$ and $v_y$, we allow agent 1 to measure the distance between the two agents $z(t)$ and analyse the Fourier series of the periodic extension of $z^2(t)$. Lemmas \ref{lemma show the Fourier of linear times sinusoid} and \ref{lemma show the Fourier of quatratic} show the Fourier series of some summands arising in \eqref{rewrite} and Lemma \ref{lemma full rank} will provide the tool to show that these summands are linearly independent and can be identified separately. Theorem \ref{process} gives details of the procedure to identify $d$, $\phi_1$, $v_x$ and $v_y$.

\begin{lemma}
\label{lemma show the Fourier of linear times sinusoid}
Suppose $d$, $v$, $\omega_1$ and $T$ are positive constants with $T=k_1\frac{2\pi}{\omega_1}$ for some positive integer $k_1$ and $f_1(t)=(d+vt)cos(\omega_1 t)$ $\forall t\in [0,T]$. Define $f_1'(t)$ to be the periodic extension of $f_1(t)$ such that $\forall t\in[0,T),f_1'(t)=f_1(t)$ and $\forall t\in(-\infty,\infty),f_1'(t)=f_1'(t+T)$. Let $c_n$ be the coefficients of its Fourier series $f_1'(t)=\sum_{n=-\infty}^\infty c_ne^{j2\pi nt/T}$
Then, if $n=k_1$, there holds
\begin{equation}
\label{Fourier k_1}
c_{k_1}=\frac{1}{2}(d+\frac{1}{2}vT)+\frac{vT}{8\pi k_1}j
\end{equation}
and if $n\neq k_1$ and $n>0$ there holds
\begin{equation}
\label{Fourier others}
c_n=\frac{vTj}{4\pi}(\frac{1}{n-k_1}+\frac{1}{n+k_1})
\end{equation}
\end{lemma}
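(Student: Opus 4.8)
The plan is to compute the Fourier coefficient straight from its definition,
$c_n=\frac{1}{T}\int_0^T f_1(t)\,e^{-j2\pi nt/T}\,dt$, and reduce the whole problem to a single family of elementary integrals. First I would set $\omega_0=2\pi/T$ and exploit the commensurability hypothesis $T=k_1\frac{2\pi}{\omega_1}$, which says exactly that $\omega_1=k_1\omega_0$, i.e. $\cos(\omega_1 t)=\tfrac12\bigl(e^{jk_1\omega_0 t}+e^{-jk_1\omega_0 t}\bigr)$. Substituting this into $f_1(t)=(d+vt)\cos(\omega_1 t)$ turns the coefficient into
$c_n=\frac{1}{2T}\bigl(I_{k_1-n}+I_{-(k_1+n)}\bigr)$, where I abbreviate $I_m:=\int_0^T (d+vt)\,e^{jm\omega_0 t}\,dt$ for integer $m$. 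Thus everything hinges on evaluating $I_m$.

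The second step is to compute $I_m$ once and for all. For $m=0$ this is trivial: $I_0=dT+\tfrac12 vT^2$. For $m\neq 0$ the constant part $\int_0^T d\,e^{jm\omega_0 t}\,dt$ vanishes, because $e^{jm\omega_0 T}=e^{j2\pi m}=1$ forces the endpoint values to cancel. The linear part I would handle by one integration by parts; here the oscillatory (antiderivative-of-the-exponential) boundary terms again cancel by the same periodicity, and only the surviving term $\frac{vT}{jm\omega_0}$ remains, which simplifies to $I_m=-\frac{vT^2j}{2\pi m}$ after using $\omega_0=2\pi/T$ and $1/j=-j$. Careful sign bookkeeping on that last simplification is the one place to slow down.

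The third step is pure assembly. For $n=k_1$ the first index is $k_1-n=0$, so $I_{k_1-n}=I_0=dT+\tfrac12 vT^2$, while $I_{-(k_1+n)}=I_{-2k_1}=\frac{vT^2j}{4\pi k_1}$; dividing by $2T$ gives $c_{k_1}=\tfrac12\bigl(d+\tfrac12 vT\bigr)+\frac{vT}{8\pi k_1}j$, which is \eqref{Fourier k_1}. For $n>0$ with $n\neq k_1$ both indices $k_1-n$ and $-(k_1+n)$ are nonzero, so I substitute the $m\neq0$ formula twice and obtain $c_n=\frac{vTj}{4\pi}\bigl(-\frac{1}{k_1-n}+\frac{1}{k_1+n}\bigr)$; rewriting $-\frac{1}{k_1-n}=\frac{1}{n-k_1}$ yields \eqref{Fourier others}.

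I do not expect a genuine obstacle: the argument is a direct integral evaluation. The only real care points are (i) invoking the periodicity $e^{j2\pi m}=1$ correctly so that both the constant term and the boundary term from the integration by parts drop out, (ii) tracking the factors of $j$ and the sign flip $1/j=-j$, and (iii) noting that the hypotheses $n>0$ and $n\neq k_1$ are precisely what guarantee $k_1\pm n\neq 0$, so that the $m\neq 0$ formula applies in the generic case and the $m=0$ branch is isolated to $n=k_1$.
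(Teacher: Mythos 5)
Your proposal is correct and follows essentially the same route as the paper, which likewise proves the lemma by direct evaluation of $c_n=\frac{1}{T}\int_0^T (d+vt)\cos(\omega_1 t)e^{-jn\frac{2\pi}{T}t}\,dt$; your decomposition $\cos(\omega_1 t)=\tfrac12\bigl(e^{jk_1\omega_0 t}+e^{-jk_1\omega_0 t}\bigr)$ and reduction to the integrals $I_m$ is just a clean way of organizing that same computation. All the index bookkeeping ($I_0$ appearing exactly when $n=k_1$, and $k_1\pm n\neq 0$ otherwise) and the sign conventions check out against \eqref{Fourier k_1} and \eqref{Fourier others}.
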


\begin{proof}
The lemma above can be proved in a straightforward manner by calculating the value of
\begin{equation}
\label{Fourier series}
c_n=\frac{1}{T}\int^T_0 (d+vt)cos(\omega_1 t)e^{-jn\frac{2\pi}{T}t}dt
\end{equation}
\end{proof}

\begin{lemma}
\label{lemma show the Fourier of quatratic}
Suppose $a$, $b$ and $T$ are positive constants (with $T$ not necessarily a multiple of $2\pi/\omega_1$). Define $f_2(t)=at^2+bt$ $\forall t\in [0,T]$. Note the domain of definition of $f_2(t)$ is bounded. Define $f_2'(t)$ to be the periodic extension of $f_2(t)$ such that $\forall t\in[0,T),f_2'(t)=f_2(t)$ and $\forall t\in(-\infty,\infty),f_2'(t)=f_2'(t+T)$. Let $c_n$ be the coefficients of the Fourier series $$f_2'(t)=\sum_{n=-\infty}^\infty c_ne^{i2\pi nt/T}$$
Then, for all $n=\pm1,\pm2,\cdots$, there holds $c_{n}=\frac{aT^2}{2\pi^2n^2}+\frac{aT^2+Tb}{2\pi n}j$ ($j^2=-1$)
\end{lemma}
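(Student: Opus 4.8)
The plan is to evaluate the defining Fourier integral directly, in the same spirit as the proof of Lemma~\ref{lemma show the Fourier of linear times sinusoid}. By definition the coefficient is
\begin{equation}
c_n=\frac{1}{T}\int_0^T (at^2+bt)e^{-jn\frac{2\pi}{T}t}\,dt,
\end{equation}
which I would split into the two integrals $\int_0^T t^2 e^{-jn(2\pi/T)t}\,dt$ and $\int_0^T t\, e^{-jn(2\pi/T)t}\,dt$, each handled by integration by parts.

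Writing $\alpha=-jn\frac{2\pi}{T}$ for brevity, the crucial observation is that every boundary term produced by integration by parts carries the factor $e^{\alpha T}=e^{-2\pi jn}$, which equals $1$ because $n$ is an integer. This is precisely where the hypothesis is clarified: the exponential $e^{-jn(2\pi/T)t}$ completes an integer number of full periods over $[0,T]$ by construction, so the collapse of the endpoint terms is automatic and, in contrast to Lemma~\ref{lemma show the Fourier of linear times sinusoid}, the relation between $T$ and $2\pi/\omega_1$ plays no role whatsoever.

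First I would compute the linear integral, obtaining $\int_0^T t\, e^{\alpha t}\,dt=T/\alpha$ once $e^{\alpha T}=1$ is used. Integrating the quadratic by parts once and substituting this result gives $\int_0^T t^2 e^{\alpha t}\,dt=T^2/\alpha-2T/\alpha^2$. Assembling the two pieces,
\begin{equation}
c_n=\frac{1}{T}\Bigl[a\Bigl(\tfrac{T^2}{\alpha}-\tfrac{2T}{\alpha^2}\Bigr)+b\,\tfrac{T}{\alpha}\Bigr]=\frac{aT+b}{\alpha}-\frac{2a}{\alpha^2}.
\end{equation}

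The final step is the substitution $\frac{1}{\alpha}=\frac{Tj}{2\pi n}$ and $\frac{1}{\alpha^2}=-\frac{T^2}{4\pi^2 n^2}$, which on collecting terms yields the stated expression $c_n=\frac{aT^2}{2\pi^2 n^2}+\frac{aT^2+Tb}{2\pi n}j$. The only place any care is needed is this complex-algebra bookkeeping --- correctly tracking the sign flips and powers of $j$ arising from $1/\alpha$ and $1/\alpha^2$ --- but there is no conceptual obstacle; the argument is entirely routine.
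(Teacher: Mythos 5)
Your proof is correct and takes exactly the approach the paper itself uses: the paper's proof consists solely of the remark that the lemma follows by evaluating $c_n=\frac{1}{T}\int_0^T(at^2+bt)e^{-jn\frac{2\pi}{T}t}\,dt$, which is precisely the integral you compute. Your integration-by-parts bookkeeping --- including the key observation that $e^{\alpha T}=e^{-2\pi jn}=1$ for every integer $n$, so the boundary terms collapse regardless of any relation between $T$ and $\omega_1$ --- checks out and reproduces the stated coefficients $c_n=\frac{aT^2}{2\pi^2 n^2}+\frac{aT^2+Tb}{2\pi n}j$.
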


\begin{proof}
The lemma can be proved in a straightforward manner by calculating the value of
\begin{equation}
\label{at^2}
c_n=\frac{1}{T}\int^T_0 (at^2+bt)e^{-jn\frac{2\pi}{T}t}dt
\end{equation}
\end{proof}

\begin{lemma}
\label{lemma full rank}
Suppose $n_1$, $n_2$, $n_3$, $n_4$, $k_1$ and $k_2$ are six different positive integers. Then the matrix
\begin{equation}
\label{linear find matrix}
\left[
       \begin{array}{cccc}
         \frac{1}{n_1^2} & \frac{1}{n_1} & \frac{1}{n_1-k_1}+\frac{1}{n_1+k_1}& \frac{1}{n_1-k_2}+\frac{1}{n_1+k_2}\\
         \frac{1}{n_2^2} & \frac{1}{n_2} & \frac{1}{n_2-k_1}+\frac{1}{n_2+k_1}& \frac{1}{n_2-k_2}+\frac{1}{n_2+k_2}\\
         \frac{1}{n_3^2} & \frac{1}{n_3} & \frac{1}{n_3-k_1}+\frac{1}{n_3+k_1}& \frac{1}{n_3-k_2}+\frac{1}{n_3+k_2}\\
         \frac{1}{n_4^2} & \frac{1}{n_4} & \frac{1}{n_4-k_1}+\frac{1}{n_4+k_1}& \frac{1}{n_4-k_2}+\frac{1}{n_4+k_2}\\
       \end{array}
\right]
\end{equation}
is full rank.
\end{lemma}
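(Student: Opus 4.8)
The plan is to show that the determinant of the matrix in \eqref{linear find matrix} is nonzero. Writing $a_1=k_1^2$, $a_2=k_2^2$, I regard row $i$ as the evaluation at $x=n_i$ of the four functions $\phi_1(x)=x^{-2}$, $\phi_2(x)=x^{-1}$, $\phi_3(x)=\frac{2x}{x^2-a_1}$ and $\phi_4(x)=\frac{2x}{x^2-a_2}$. First I would clear denominators: since $n_1,\dots,n_4,k_1,k_2$ are distinct positive integers, the quantity $D(n_i)=n_i^2(n_i^2-a_1)(n_i^2-a_2)$ is nonzero for each $i$, so multiplying row $i$ by $D(n_i)$ scales the determinant by the nonzero factor $\prod_i D(n_i)$ and leaves the rank unchanged. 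The rows then become evaluations of the four \emph{polynomials}
\[
M_0=(x^2-a_1)(x^2-a_2),\ \ M_1=x(x^2-a_1)(x^2-a_2),\ \ M_2=2x^3(x^2-a_2),\ \ M_3=2x^3(x^2-a_1),
\]
and it suffices to prove $\det\big[M_j(n_i)\big]\neq0$.

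Next I would simplify the column space. Because $k_1\neq k_2$ we have $M_2-M_3=2(a_1-a_2)x^3$, so $x^3$ lies in $\mathrm{span}\{M_0,\dots,M_3\}$; back-substituting then yields $x^5$ (from $M_2$) and $x$ (from $M_1$, using $a_1a_2\neq0$). Hence $\mathrm{span}\{M_0,M_1,M_2,M_3\}=\mathrm{span}\{x,x^3,x^5,M_0\}$, a genuine change of basis between two bases of the same four-dimensional space. Consequently $\det[M_j(n_i)]$ equals a nonzero constant (the determinant of the change-of-basis matrix) times $\det[\,n_i,\ n_i^3,\ n_i^5,\ M_0(n_i)\,]$, so it remains only to show this last determinant is nonzero.

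Finally, expanding $M_0(x)=x^4-(a_1+a_2)x^2+a_1a_2$ by multilinearity in the last column splits $\det[n_i,n_i^3,n_i^5,M_0(n_i)]$ into three determinants whose columns are pure monomials, with exponent sets $\{1,3,4,5\}$, $\{1,2,3,5\}$ and $\{0,1,3,5\}$. Each such generalized Vandermonde determinant factors as $\pm\, s_\mu(n_1,\dots,n_4)\prod_{i<j}(n_j-n_i)$, where $s_\mu$ is a Schur polynomial; after reordering columns to track the permutation signs, the three contributions combine into
\[
\det[\,n_i,n_i^3,n_i^5,M_0(n_i)\,]=-\Big(\textstyle\prod_{i<j}(n_j-n_i)\Big)\big[s_{(2,2,2,1)}+(a_1+a_2)\,s_{(2,1,1,1)}+a_1a_2\,s_{(2,1,0,0)}\big].
\]
The key point I would stress is that the signs conspire so that all three coefficients $1$, $a_1+a_2=k_1^2+k_2^2$ and $a_1a_2=k_1^2k_2^2$ are \emph{positive}; since every Schur polynomial is a nonempty sum of monomials with nonnegative coefficients and is evaluated at the positive integers $n_i$, the bracket is strictly positive, while the Vandermonde factor is nonzero because the $n_i$ are distinct. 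Therefore the determinant is nonzero and the matrix is full rank.

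I would flag the main obstacle explicitly: the tempting route---clearing denominators to a single numerator polynomial and counting its roots---fails, because that numerator has degree $5$ while we control only $4$ zeros, leaving room for a spurious fifth root. The substance of the lemma is precisely that no such degeneracy occurs, and the reason is the positivity above; the delicate part of the argument is therefore the bookkeeping of permutation signs that turns the indefinite coefficients of $M_0$ into a strictly positive combination of Schur polynomials.
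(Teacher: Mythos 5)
Your proof is correct, and it necessarily takes a different route from the paper's, because the paper offers no substantive proof at all: its entire argument for this lemma is the single sentence that it ``can be proved in a straightforward manner by calculating the value of the matrix determinant,'' with no computation shown. Your argument supplies exactly the missing content, and each step checks out: the row multipliers $n_i^2(n_i^2-k_1^2)(n_i^2-k_2^2)$ are nonzero because the six integers are distinct; the column span equals $\mathrm{span}\{x,x^3,x^5,M_0\}$ since $a_1\neq a_2$ and $a_1a_2\neq 0$, and the change-of-basis determinant (it works out to $-4a_1a_2(a_1-a_2)$) is nonzero; and your sign bookkeeping is right --- the column-sorting signs of the three generalized Vandermonde determinants are $-1,+1,-1$, which against the coefficients $1$, $-(a_1+a_2)$, $+a_1a_2$ of $M_0(x)=x^4-(a_1+a_2)x^2+a_1a_2$ produce exactly the all-positive bracket $s_{(2,2,2,1)}+(a_1+a_2)s_{(2,1,1,1)}+a_1a_2\,s_{(2,1,0,0)}$ you state; each of these partitions has at most four parts, so each Schur polynomial in four variables is a nonempty sum of monomials with nonnegative coefficients, hence strictly positive at positive arguments. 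Compared with the paper's bare assertion, your proof buys two things: an explicit factorization of the determinant (as a nonzero constant times the Vandermonde factor times a positive Schur combination, divided by the row multipliers), and the strictly stronger conclusion that the matrix is nonsingular for \emph{any} distinct positive reals $n_1,\dots,n_4,k_1,k_2$, not merely distinct positive integers. Your closing remark is also well taken: it explains why ``calculating the determinant'' is not as straightforward as the paper implies, since a naive root count on the cleared numerator (degree $5$, with only $4$ controlled zeros) cannot close the argument, and the positivity structure you exhibit is precisely what rules out a spurious root at an admissible integer.
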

\begin{proof}
The lemma can be proved in a straightforward manner by calculating the value of the matrix determinant.
\end{proof}

In the following theorem, we show that each agent can estimate the position and translational velocity of the other agent using distance-only measurements over an interval of time $T$.
For now we assume that the angular velocities of agents 1 and 2 are commensurate. We later explain what happens if $\omega_1$ and $\omega_2$ are incommensurate.
\begin{theorem}
\label{process}
For a pair of point agents in $\mathbb{R}^2$, if each agent is executing a combination of circular motion and linear motion and the associated angular frequencies are commensurate, each agent can find the position and translational velocity of the other agent by distance-only measurements over an interval.
\end{theorem}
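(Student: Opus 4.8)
The plan is to carry out the whole identification in the frequency domain over a single window $[0,T]$, using commensurability to make every oscillatory term in \eqref{rewrite} a harmonic of one fundamental. First I would fix $T$ so that $\omega_1 T = 2\pi k_1$ and $\omega_2 T = 2\pi k_2$ for positive integers $k_1\neq k_2$; then automatically $|\omega_1-\omega_2|T = 2\pi|k_1-k_2|$, so the three oscillation frequencies present in \eqref{rewrite} sit exactly at harmonics $k_1$, $k_2$ and $|k_1-k_2|$ of $2\pi/T$. Substituting $d_x=d+v_xt$, $d_y=v_yt$ and sorting the right-hand side of \eqref{rewrite} gives four groups: the quadratic polynomial $(v_x^2+v_y^2)t^2+2dv_xt+(d^2+r_1^2+r_2^2)$; the $\omega_1$ group $-2r_1[(d+v_xt)\cos(\omega_1t+\phi_1)+v_yt\sin(\omega_1t+\phi_1)]$; the analogous $\omega_2$ group carrying the unknowns $r_2,\phi_2$; and the pure sinusoid $-2r_1r_2\cos[(\omega_1-\omega_2)t+(\phi_1-\phi_2)]$, which, lacking any linear-in-$t$ modulation, contributes only at harmonic $|k_1-k_2|$.

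Next I would compute the Fourier coefficients $c_n$ of the periodic extension of $z^2(t)$. Lemma~\ref{lemma show the Fourier of quatratic} supplies the quadratic group's footprint (a $1/n^2$ real part and a $1/n$ imaginary part at every $n$). Writing each phased sinusoid via the angle-addition formula in terms of $\cos\omega_i t$ and $\sin\omega_i t$, Lemma~\ref{lemma show the Fourier of linear times sinusoid}, together with the sine companion obtained by the same integration, describes the $\omega_1$ and $\omega_2$ groups: away from their own harmonics they spread as $\tfrac{1}{n-k_1}+\tfrac{1}{n+k_1}$ and $\tfrac{1}{n-k_2}+\tfrac{1}{n+k_2}$ as in \eqref{Fourier others} (the cosine- and sine-modulated pieces being separated by taking real and imaginary parts, using $c_{-n}=\overline{c_n}$), while at $n=k_1$ the constant-amplitude part $-2r_1d\cos(\omega_1t+\phi_1)$ produces the dominant resonant term of \eqref{Fourier k_1}.

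The separation step is then clean. At any generic harmonic $n\notin\{0,k_1,k_2,|k_1-k_2|\}$, $c_n$ is a linear combination of exactly the basis patterns appearing in \eqref{linear find matrix}; evaluating at four distinct such harmonics produces a square system whose coefficient matrix is \eqref{linear find matrix}, nonsingular by Lemma~\ref{lemma full rank}, so the four amplitudes are recovered uniquely. From the polynomial amplitudes I read off $v_x^2+v_y^2$ and $dv_x$, and from the $\omega_1$ and $\omega_2$ amplitudes the modulation magnitudes and phases. I would then use the resonant harmonic $c_{k_1}$ of \eqref{Fourier k_1}, subtracting the now-known spreading contributions of all the other groups, to extract $d$ and $\phi_1$ from the dominant $-2r_1d\cos(\omega_1t+\phi_1)$ term; combining $d$ with $dv_x$ and $v_x^2+v_y^2$ finally yields $v_x$ and $v_y$, and the $\omega_2$ amplitudes deliver $r_2,\phi_2$ as a by-product and consistency check.

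I expect the real difficulty to be not the linear separation, which Lemma~\ref{lemma full rank} settles outright, but the final nonlinear inversion: one must verify that the map from $(d,\phi_1,v_x,v_y)$ to the measured amplitudes is injective on the admissible range, i.e.\ resolve the sign of $v_y$ implicit in passing from $v_x^2+v_y^2$ and $dv_x$ to the components, and pin down $\phi_1$ modulo $2\pi$ from a single complex amplitude. One must also confirm the genericity/edge conditions that keep the four frequency footprints distinct (for instance $k_1\neq k_2$ and $k_1\neq|k_1-k_2|$). The commensurability hypothesis is precisely what makes the common window $T$, and hence the entire harmonic bookkeeping, possible; when $\omega_1$ and $\omega_2$ are incommensurate no finite $T$ captures all periods exactly and the argument must be modified, which is treated separately.
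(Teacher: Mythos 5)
Your frequency-domain setup, the four-group decomposition of \eqref{rewrite}, and the linear separation at four generic harmonics via Lemma~\ref{lemma full rank} coincide with the paper's own proof of Theorem~\ref{process}, so the bulk of your argument is sound. The genuine gap is in your endgame, and you half-identified it yourself. First, ``extracting $d$ and $\phi_1$ from the dominant $-2r_1 d\cos(\omega_1 t+\phi_1)$ term'' at the resonant harmonic is not immediate: by Lemma~\ref{lemma show the Fourier of linear times sinusoid}, the $\omega_1$-group coefficient at $n=k_1$ satisfies (in the paper's notation, cf.~\eqref{ck1}) $u_{k_1}=\frac{r_1}{2}\bigl(d+\tfrac{1}{2}v_xT-\tfrac{1}{2}v_yTj\bigr)e^{j(\phi_1+\pi)}+\frac{U}{2k_1}$, i.e.\ even after you subtract the spreading contributions of the other groups and the term $\frac{U}{2k_1}$, the resonant coefficient still mixes $d$ with the unknown $v_x,v_y$, so $d$ and $\phi_1$ cannot be read off from it alone. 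Second, your fallback for the velocity --- recovering $v_x$ from $dv_x$ (via $I$ in \eqref{definition of I}) and then $v_y$ from $v_x^2+v_y^2$ (via $R$ in \eqref{definition of R}) --- genuinely leaves the sign of $v_y$ undetermined; you flag this injectivity question as something ``one must verify'' but never resolve it, and as stated the map you are inverting is two-to-one, so the theorem's claim is not established by your argument.

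Both problems are solved simultaneously by using the complex amplitude $U$ that you have already recovered in the linear solve, which is exactly what the paper does. Since $U=r_1\bigl(j\frac{v_xT}{4\pi}+\frac{v_yT}{4\pi}\bigr)e^{j(\phi_1+\pi)}$ by \eqref{definition of U}, the combination $u_{k_1}-\frac{U}{2k_1}+\pi jU=\frac{r_1 d}{2}e^{j(\phi_1+\pi)}$ cancels every velocity term, so $d$ and $\phi_1$ follow from a single modulus and argument computation, as in \eqref{d} and \eqref{phi}; then, with $\phi_1$ known, $\frac{4\pi U}{Tr_1e^{j(\phi_1+\pi)}}=v_y+jv_x$ yields $v_x$ and $v_y$ linearly and unambiguously, as in \eqref{vy} and \eqref{vx}. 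No injectivity verification and no sign resolution are needed: the ``real difficulty'' you anticipated disappears once the inversion is routed through $U$ rather than through the quadratic-group amplitudes $R$ and $I$.
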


\begin{proof}
The definitions of $r_1$, $r_2$, $\omega_1$, $\omega_2$, $d$, $z$, $v_x$, $v_y$, $\phi_1$ and $\phi_2$ are the same as in Section \ref{sec:problemStatement}. We choose $T$ so that there exist integers $k_1,k_2$ defining the multiple which $T$ represents of the periods associated with the two angular velocities, i.e. $k_1=\frac{\omega_1T}{2\pi}$ and $k_2=\frac{\omega_2T}{2\pi}$. The existence of $k_1$ and $k_2$ relates to the concept of commensurable numbers, see Remark 3 in \cite{6760298}.

Suppose one continuously measures $z$ for a time period $T$ and finds the Fourier series of the periodic extension of $z^2$. Consider \eqref{rewrite} and suppose $c_n$ are the coefficients of Fourier series of the periodic extension of $z^2$, $s_n$ are the coefficients of Fourier series of the periodic extension of $(d_x^2+d_y^2+r_1^2+r_2^2)$, $u_n$ are the coefficients of Fourier series of the periodic extension of $-2d_xr_1cos(\omega_1t+\phi_1)-2d_yr_1sin(\omega_1t+\phi_1)$ and $w_n$ are the coefficients of Fourier series of the periodic extension of $2d_xr_2cos(\omega_2t+\phi_2)+2d_yr_2sin(\omega_2t+\phi_2)$.

From \eqref{rewrite} we know that for any $n>0\cap n\neq |k_1-k_2|$ there holds
\begin{equation}
\label{linear combination1}
c_n=s_n+u_n+w_n
\end{equation}
Note the coefficients of the Fourier series of the term $-2r_1r_2cos[(\omega_1-\omega_2)t+(\phi_1-\phi_2)]$ in \eqref{rewrite} are all zero except for the index $n=|k_1-k_2|$.

Define constants:
\begin{equation}
\label{definition of U}
U=r_1(j\frac{v_xT}{4\pi}+\frac{v_yT}{4\pi})e^{j(\phi_1+\pi)}
\end{equation}
and
\begin{equation}
\label{definition of W}
W=r_2(j\frac{v_xT}{4\pi}+\frac{v_yT}{4\pi})e^{j(\phi_2)}
\end{equation}
Suppose further that
\begin{equation}
\label{definition of R}
R=\frac{(v_x^2+v_y^2)T^2}{2\pi^2}
\end{equation}
and
\begin{equation}
\label{definition of I}
I=\frac{(v_x^2+v_y^2)T^2+2v_xdT}{2\pi}
\end{equation}
From Lemma \ref{lemma show the Fourier of linear times sinusoid}, Lemma \ref{lemma show the Fourier of quatratic} and \eqref{rewrite} we know for any $n>0\cap n\neq k_1,k_2~or~|k_1-k_2|$ there holds

\begin{equation}
\label{linear combination2}
\begin{split}
c_n&=\frac{1}{n^2}R+\frac{1}{n}Ij\\
&+(\frac{1}{n-k_1}+\frac{1}{n+k_1})\cdot 2U+(\frac{1}{n-k_2}+\frac{1}{n+k_2})\cdot 2W
\end{split}
\end{equation}

From \eqref{linear combination2} and Lemma \ref{lemma full rank} we know that if we have four values of $c_n,n>0\cap n\neq k_1,k_2~or~|k_1-k_2|$, we are able to find the unique solutions of $R$, $I$, $U$ and $W$.
Because $d$, $v_x$ and $v_y$ are all real numbers, ideally $R$ and $I$ should also be real numbers. Sometimes due to noise or error, the $R$ and $I$ obtained from matrix operations may be complex numbers, but this will not affect the process below.

Now we have the value of $U$ and $W$ and can obtain $u_{k_1}$. Furthermore, from Lemma \ref{lemma show the Fourier of linear times sinusoid} and \eqref{definition of U} we know that
\begin{equation}\label{ck1}
u_{k_1}-\frac{U}{2k_1}=\frac{r_1}{2}(d+\frac{1}{2}v_xT-\frac{1}{2}v_yTj)e^{j(\phi_1+\pi)}
\end{equation}
\begin{equation}\label{ck1+1}
U=r_1(j\frac{v_xT}{4\pi}+\frac{v_yT}{4\pi})e^{j(\phi_1+\pi)}
\end{equation}
and $d$, $v_x$, $v_y$ and $\phi_1$ can be found from these equations.

The solutions for $d$ and $\phi_1$ are given by
\begin{equation}\label{d}
d=\frac{2}{r_1}|u_{k_1}-\frac{U}{2k_1}+\pi jU|
\end{equation}
\begin{equation}\label{phi}
\phi_1=arg(u_{k_1}-\frac{U}{2k_1}+\pi jU)+\pi
\end{equation}
and the solutions for $v_x$ and $x_y$ are given by
\begin{equation}\label{vy}
v_x=\mathop{\rm Im}(\frac{4\pi U}{T r_1 e^{j(\phi_1+\pi)}})
\end{equation}
\begin{equation}\label{vx}
v_y=\mathop{\rm Re}(\frac{4\pi U}{T r_1 e^{j(\phi_1+\pi)}})
\end{equation}
\end{proof}

\begin{remark}
\label{rem:no rotation}
In the special situation where there are no rotations and both agents are executing linear motion, the absolute value of the relative velocity and distance between these agents can be obtained from the Fourier series of the term $(d_x^2+d_y^2+r_1^2+r_2^2)$ in \eqref{rewrite} but the direction cannot be found. This result is the same as the situation described in Section 5.2.1 of the previous paper \cite{cao2011formation}.
\end{remark}

\begin{remark}
\label{incommensurate}
When $\omega_1$ and $\omega_2$ are incommensurate\footnote[1]{Two non-zero real numbers $a$ and $b$ are said to be commensurable if $a/b$ is a rational number.}, then $z^2(t)$ in \eqref{rewrite} is an almost periodic function \cite{corduneanu1989almost} and one cannot have $T=k_1\frac{2\pi}{\omega_1}=k_2\frac{2\pi}{\omega_2}$ with $k_1/k_2$ a rational number. Thus at least one or maybe both of $k_1$ and $k_2$ are not integers. Now $T$ should be chosen (and, as guaranteed by the theory of almost periodic functions, it can be so chosen by taking it sufficiently large) to ensure that both $k_1,k_2$ are close to integers (and indeed one may be an integer). Then the Fourier coefficients in Lemma~\ref{lemma show the Fourier of linear times sinusoid} and Lemma~\ref{lemma show the Fourier of quatratic} are different; their expressions have extra additive terms which are small if the deviation of $k_1$ and $k_2$ from integer numbers are small. Thus in Theorem~\ref{process} we can still find $d$, $v_x$, $v_y$ and $\phi_1$ with some error which is also small if the deviations of $k_1$ and $k_2$ from integer numbers are small. The longer $T$ is, the more accurate the results are.
\end{remark}

\section{Assigning Radius Adaptively}
\label{sec:adaptive}
In the above sections, we let each agent infer the position and relative velocity information of neighbouring agents by 1) carrying out a Fourier transform and then 2) identifying peaks to estimate $\omega$ of neighbouring agents 3) solving the set of linear equations \eqref{linear combination2}. In step 2) if $\|v_{12}\|T$ is sufficiently small in comparison to $r_1,r_2$, we can show that there are always peaks at $k_1$ and $k_2$.

\begin{lemma}
\label{lem::have peak}
Adopt the hypothesis in Theorem \ref{process} and consider \eqref{linear combination1}. There exists a positive real number $\alpha$ such that if $d>r_1,r_2>\alpha \cdot \|v_{12}\|$, then  $c_n$ (regarded as a function of the integer $n$) has peaks at $n=k_1$ and $n=k_2$.
\end{lemma}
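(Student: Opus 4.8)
The plan is to exploit the additive split \eqref{linear combination1}, $c_n=s_n+u_n+w_n$, and to show via Lemmas \ref{lemma show the Fourier of linear times sinusoid} and \ref{lemma show the Fourier of quatratic} that $|c_n|$ is an order of magnitude larger in $\|v_{12}\|$ at $n=k_1,k_2$ than at the neighbouring integers. First I would substitute $d_x=d+v_xt$ and $d_y=v_yt$ into \eqref{rewrite} and split each of the $\omega_1$- and $\omega_2$-terms into a pure sinusoid (the part carrying the constant $d$) plus a sinusoid whose amplitude grows linearly in $t$ (the part carrying $v_xt$ or $v_yt$). The pure sinusoid $-2r_1d\cos(\omega_1t+\phi_1)$ contributes to the Fourier series only at $n=\pm k_1$ and there has magnitude $r_1d$; likewise the $\omega_2$ pure sinusoid contributes magnitude $r_2d$ only at $n=\pm k_2$. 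Every other contribution to $u_n$ and $w_n$ stems from a linear-times-sinusoid term and, by \eqref{Fourier k_1}--\eqref{Fourier others}, is proportional to $v_x$ or $v_y$; and since $d_x^2+d_y^2=\|v_{12}\|^2t^2+2dv_xt+d^2$, Lemma \ref{lemma show the Fourier of quatratic} makes every $s_n$ with $n\neq0$ proportional to the velocity as well.

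With this accounting I would establish two estimates. At the peak indices, the reverse triangle inequality gives $|c_{k_1}|\ge r_1d-C_1\|v_{12}\|$ and $|c_{k_2}|\ge r_2d-C_2\|v_{12}\|$, since the dominant pure-sinusoid term survives while every velocity-proportional correction is $O(\|v_{12}\|)$. At the adjacent indices $n=k_1\pm1$ and $n=k_2\pm1$ the pure-sinusoid contributions are absent, so $s_n$, $u_n$ and $w_n$ are each proportional to $\|v_{12}\|$, giving $|c_{k_1\pm1}|,|c_{k_2\pm1}|=O(\|v_{12}\|)$ with constants I track explicitly below. A peak at $k_1$, meaning $|c_{k_1}|>|c_{k_1\pm1}|$, then follows once $r_1d$ beats these $O(\|v_{12}\|)$ bounds, and analogously at $k_2$; choosing $\alpha$ above the resulting thresholds together with $d>r_1,r_2$ yields the conclusion.

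The step I expect to be the main obstacle is controlling the neighbouring values, because two of their contributions secretly carry a factor of $d$ and so do not shrink relative to the peak merely by enlarging $d$. The quadratic term gives $|s_{k_1\pm1}|\sim dv_xT/(\pi k_1)$, whose factor $d$ matches that of the peak $\sim r_1d$; cancelling it and using $T/k_1=2\pi/\omega_1$ reduces the comparison to $r_1$ against a constant multiple of $\|v_{12}\|/\omega_1$. The off-peak part of $u$ gives $|u_{k_1\pm1}|\sim r_1\|v_{12}\|T/(2\pi)$, and here the peak-to-neighbour ratio is $2\pi d/(\|v_{12}\|T)$, controlled by $d>r_1>\alpha\|v_{12}\|$. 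After these cancellations the net threshold $\alpha$ depends only on $T$, $\omega_1$ and $\omega_2$. A last case to dispatch is when the cross-term index $|k_1-k_2|$, excluded from \eqref{linear combination1}, lands at $k_1\pm1$ or $k_2\pm1$: there the neighbour inherits the magnitude $r_1r_2$ of the term $-2r_1r_2\cos[(\omega_1-\omega_2)t+(\phi_1-\phi_2)]$, but $d>r_2$ still gives $r_1d>r_1r_2$, so the peak survives. This is exactly where the hypothesis $d>r_1,r_2$, over and above $r_i>\alpha\|v_{12}\|$, is needed.
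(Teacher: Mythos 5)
Your proposal is correct and follows essentially the same route as the paper: the paper's proof likewise expresses each off-peak $\|c_n\|$ as a sum of terms proportional to $\|v_{12}\|$ (with coefficients involving $d$, $r_1$, $r_2$, $T$) and observes that at $n=k_1$ an additional term of the form $h_8\, d\, r_1$ appears and dominates once $\alpha$ is large enough, which is exactly your pure-sinusoid/velocity-term split followed by the triangle-inequality comparison. If anything, you are more careful than the paper, which leaves all coefficients as abstract bounded functions $h_1,\dots,h_8$ and never addresses the case where the cross-term index $|k_1-k_2|$ falls adjacent to $k_1$ or $k_2$ --- a case you correctly identify as the one genuinely requiring the hypothesis $d>r_1,r_2$.
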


\begin{proof}
When $n\neq 0,k_1,k_2,|k_1-k_2|$, the dependence of $\|c_n\|$ on $\|v_{12}\|$ can be expressed as follows
\begin{equation}\label{normal cn}
\begin{split}
\|c_n\|&=\Big\|h_1(k_1,k_2,T)\|v_{12}\|^2+ h_2(k_1,k_2,T) \|v_{12}\| d\\ &+h_3(k_1,k_2,T)\|v_{12}\| r_1+h_4(k_1,k_2,T)\|v_{12}\| r_2\Big\|.
\end{split}
\end{equation}
On the other hand, when $n=k_1$, there holds
\begin{equation}\label{special cn}
\begin{split}
\|c_n\|&=\Big\|h_5(k_1,k_2,T)\|v_{12}\|^2+ h_6(k_1,k_2,T) \|v_{12}\| d \\ &+h_7(k_1,k_2,T)\|v_{12}\| r_2+h_8(k_1,k_2,T)d r_1\Big\|
\end{split}
\end{equation}
where $h_{1,2,\cdots,8}(k_1,k_2,T)$ are all bounded functions for integer $k_1,k_2$ and $T>0$.

Now compare \eqref{normal cn} and \eqref{special cn}. For a large enough $\alpha$ the term $h_8(k_1,k_2,T)d r_1$ will be dominant and thus there will be peaks recognised at $n=k_1$. Similarly there will also be peaks recognised at $n=k_2$.
\end{proof}

According to Lemma \ref{lem::have peak}, the proposed algorithm works if $d>r_1,r_2>\alpha \cdot \|v_{12}\|$. In reality, $d>r_1,r_2$ is automatically satisfied if we aim to avoid collision. Furthermore, in order to ensure that $r_1,r_2>\alpha \cdot \|v_{12}\|$ holds for each agent pair, we propose an adaptive radius algorithm whereby $r_i$ of each agent re-set at the end of each $T$ second intervals as follows
\begin{equation}
r_i\Big((k+1)T\Big)=\alpha \cdot \max_j\{ \|v_{ij}(kT)\| \}
\end{equation}
where $j$ denotes the indeces of neighbouring agents of $i$ and $\alpha$ is a sufficient large value. Note that because $r_i$ only changes at the end of each interval $T$, the radius is fixed within each interval.

The adaptive radius law will ensure that $r_1,r_2>\alpha \cdot \|v_{12}\|$ holds for each agent pair. Furthermore, as velocity consensus is being achieved, $\|v_{ij}(kT)\|$ will approach zero and so will $r_i$. It is noticeable that the accuracy of estimation of $\|v_{ij}(kT)\|$ is independent of the value of radius. Each agent can estimate the \textit{norm} of velocities of neighbours' circle centres via $R$ according to \eqref{definition of R}, even if no peaks are identified. This phenomenon is consistent with the paper \cite{cao2011formation}, which shows that without circular motions, for agents only doing linear motions, it is possible to estimate the norm of relative velocities of neighbours, even though the directions are left unknown.

When there are sudden changes in velocities of agents due to e.g. wind or deliberate change of course by a leader agent, an already achieved consensus and formation may be broken. In this case, even if the radius of the circle of each agent has already approached to zero, each agent can still obtain a good estimate of the absolute value of velocities of its neighbours' circle centres. This can result in an increase of radius of circular motions in response to the broken consensus, which allows the agents to achieve velocity consensus and formation shape control again.

A simple demonstration of this idea is shown in the figure below. Similarly to the setting in Section V-B in \cite{6760298}, consider a multi-agent system shown in Fig. \ref{Fig1}, suppose $\omega_i$ is the angular velocity of agent $i$, $T$ is the sampling time interval, $(v_{xi},v_{yi})$ is the translational velocity of agent $i$ and $(p_{xi},p_{yi})$ is the position of circle centre of agent $i$. In the simulation, we set $\omega_1=\omega_3=5$, $\omega_2=-3$, $T=2\pi$. When $t=0$, $(v_{x1},v_{y1})=(-4,2)$, $(v_{x2},v_{y2})=(3,-2)$, $(v_{x3},v_{y3})=(2,4)$, $(p_{x1},p_{y1})=(70,30)$, $(p_{x2},p_{y2})=(0,50)$ ,$(p_{x3},p_{y3})=(0,0)$ and $\varepsilon=0.35$. All the parameters are in SI units.

\begin{figure}[h]
\begin{center}
\includegraphics[width=0.15\paperwidth]{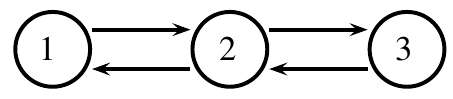}
\end{center}
\caption{A example of multi-agent system}
\label{Fig1}
\end{figure}

Fig. \ref{Re-achieved consensus_1} and Fig. \ref{Re-achieved consensus_2} show a simulation where there is a sudden change in velocity of agent 2 at $t=20T$. In Fig. \ref{Re-achieved consensus_1}, it is clear that the radius of circular motions of agents increases in response to the broken consensus. In Fig. \ref{Re-achieved consensus_2}, it is shown that the consensus of translational velocities is re-achieved after this sudden change.

\begin{figure}[h]
\begin{center}
\includegraphics[width=0.35\paperwidth]{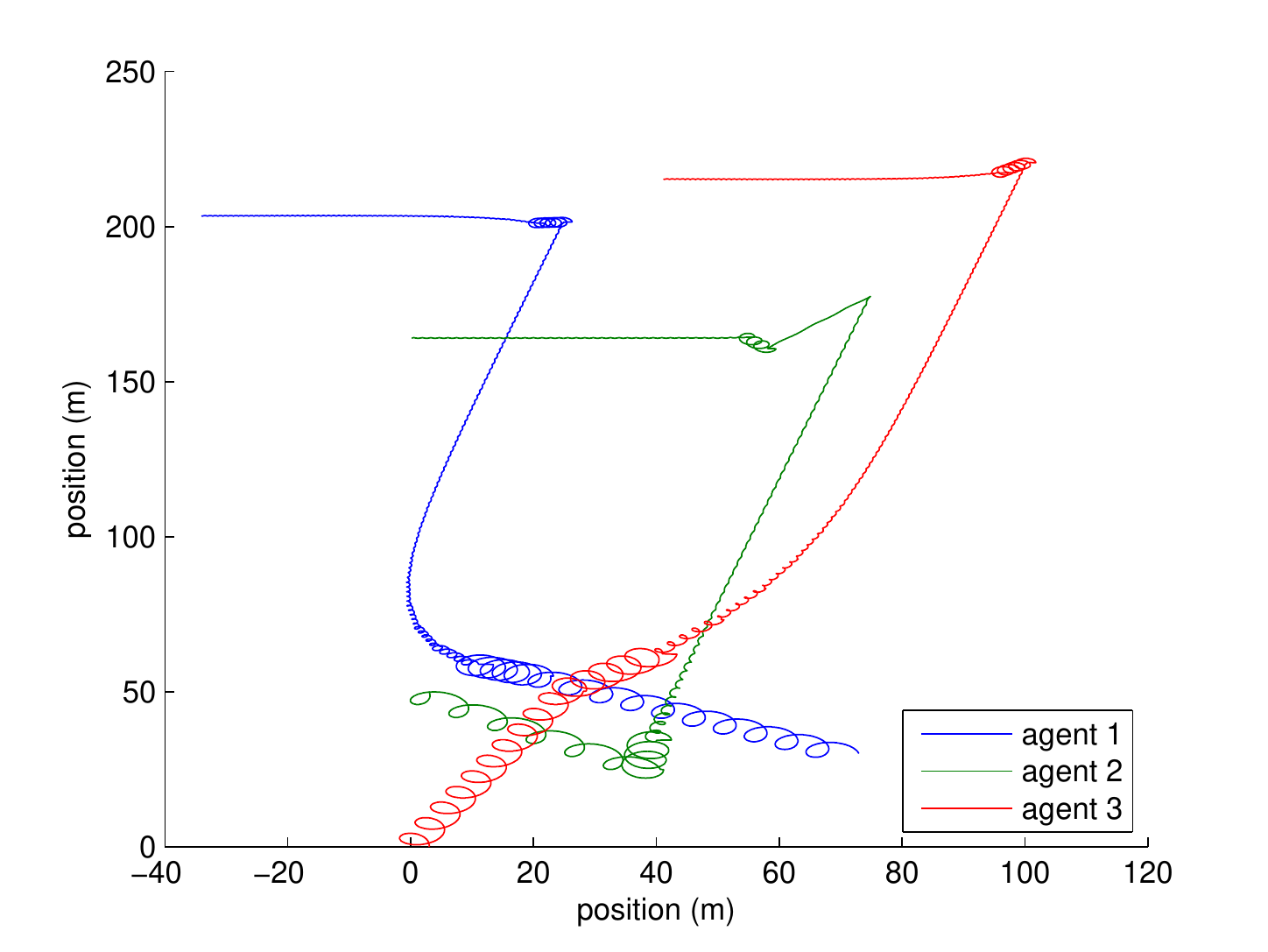}
\end{center}
\caption{Re-achieved consensus}
\label{Re-achieved consensus_1}
\end{figure}

\begin{figure}[h]
\begin{center}
\includegraphics[width=0.4\paperwidth]{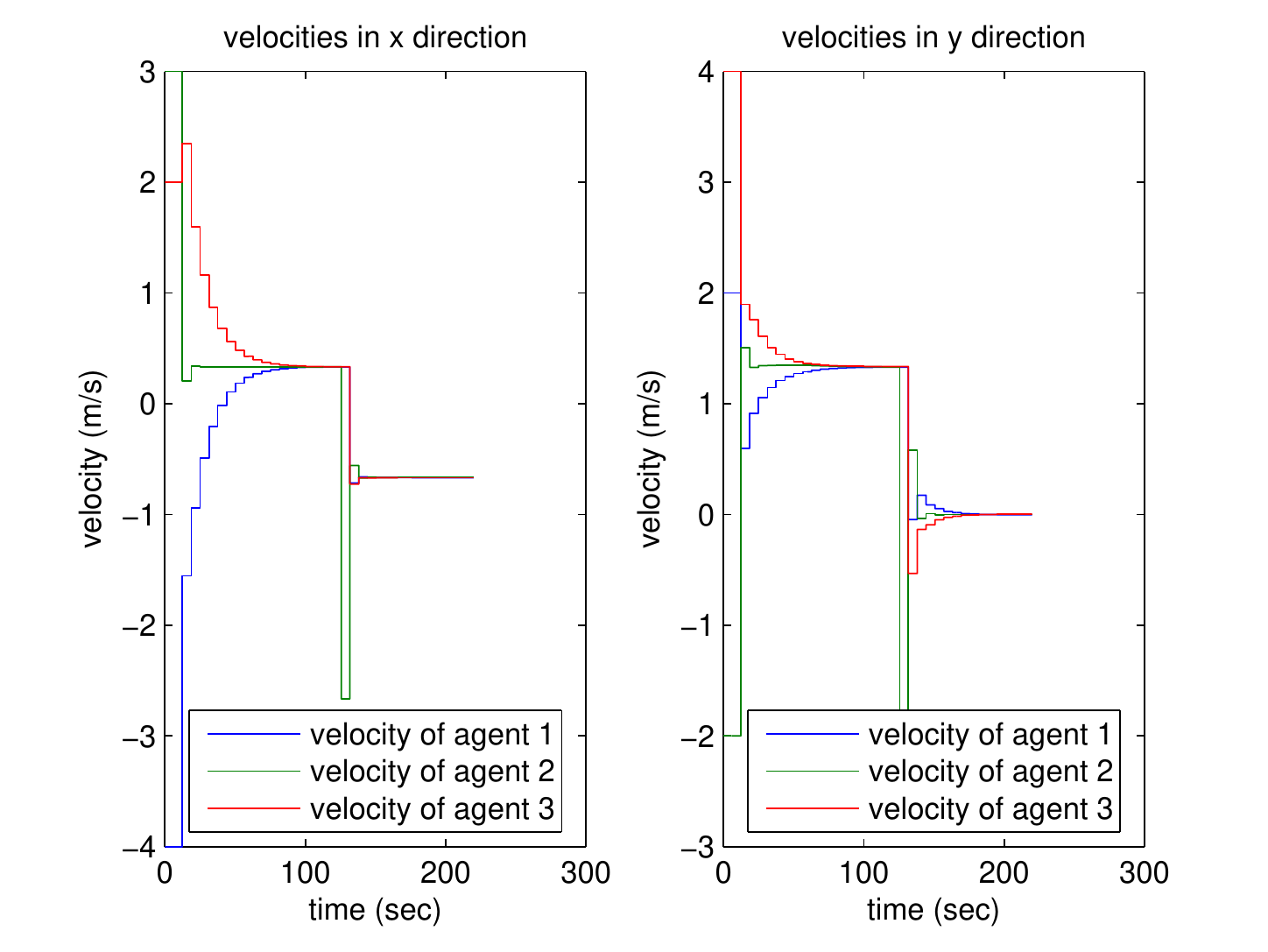}
\end{center}
\caption{Velocities of circle centres in re-achieved consensus}
\label{Re-achieved consensus_2}
\end{figure}

\section{Combining Velocity Consensus and Formation Shape Control}
\label{sec:control}
\subsection{Stability of discrete time control algorithm}
\label{sec::Stability of discrete time control algorithm}
In \cite{anderson2012combining}, an algorithm is derived to combine velocity consensus with formation shape control. The algorithm deals with a continuous time problem and is in the form
\begin{equation}
\label{formation conti}
\begin{split}
&\dot{p}_i=v_i\\
&\dot{v}_i=\sum_{j\in \mathcal{N}_i}(v_j-v_i)+2\sum_{j\in \mathcal{N}_i}(d^{*2}_{ij}-d^{2}_{ij})(p_i-p_j)
\end{split}
\end{equation}
where $p_i$ is the position of the $i$th agent, $i=1, \cdots, N$, $v_i$ is the velocity of the $i$th agent and $\mathcal{N}_i$ is the set of neighbouring agents of agent $i$. Further, $d^{*}_{ij}$ is the desired distance between agent $i$ and $j$ and $d_{ij}$ is the current distance between agent $i$ and $j$. In our context, agent positions and velocities refer to the centre of the circular motion.
The system equations \eqref{formation conti} can be written in the matrix form
\begin{equation}\label{eq:system_1}
\begin{split}
 \dot p &= v \\
 \dot v &= -(\mathcal{L}\otimes I_2)v + f(p)
\end{split}
\end{equation}
where $p\in\mathbb{R}^{2N}$ is the vector of all $p_i$ stacked together, $\mathcal{L}$ denotes the Laplacian matrix which is positive semi-definite and has one zero eigenvalue when the graph is connected and undirected, and $f(p)$ is a vector with the entries $2\sum_{j\in \mathcal{N}_i}(d^{*2}_{ij}-d^{2}_{ij})(p_i-p_j)$, $i=1, \cdots, N$.

In our case, the algorithm cannot be implemented directly because we let each agent measure distance for a time period $T$ and then make a velocity adjustment at the end of each such interval. A discrete version of \eqref{formation conti} for our use is given by
\begin{small}
\begin{equation}
\label{formation discrete}
\begin{split}
&\dot{p}_{i}=v_{i}\\
&v_{i}\big((k+1)T\big)=v_{i}(kT) +\epsilon_1 T\sum_{j\in \mathcal{N}_i}\Big(v_{j}(kT)-v_{i}(kT)\Big)\\
&~~~~~~~~+2\epsilon_2 T\sum_{j\in \mathcal{N}_i}\Big(d^{*2}_{ij}-d^{2}_{ij}(kT)\Big)\Big(p_i(kT)-p_j(kT)\Big)
\end{split}
\end{equation}
\end{small}
where $\epsilon_1, \epsilon_2$ are suitably small positive constants; more information is given below. Note that the first equation remains in continuous time while the second equation is discretised. However, since $v_{i}(t)$ is constant over an interval $T$, it follows that the discretisation of the first equation, viz. $p_{i}\big((k+1)T\big)=p_{i}(kT)+Tv_{i}(kT)$ exactly interpolates the continuous function $p_{i}(t)$ for $t=kT$ with integer $k$.

To show the stability of \eqref{formation discrete}, we start with the continuous-time system and make the following transformation
\begin{equation}\label{eq::transform p p bar}
\bar p_r = Rp,~~~\bar v_r = Rv
\end{equation}
where $R$ is an orthonormal matrix whose first two rows are $(\boldsymbol{1}\otimes I_2)^\top/\sqrt{N}$, $\bar{p}_r:=[p_0^\top~\bar{p}^\top]^\top$ with $p_0\in\mathbb{R}^2$ and $v:=[v_0^\top~\bar{v}^\top]^\top$ with $v_0\in\mathbb{R}^2$. Then $\dot{v}_0=0$, that is the position of the center of mass of the agents in $\bar p$-coordinates is constant, and the system equations in $\bar p$ and $\bar v$ are
\begin{equation}\label{eq:system_2}
\begin{split}
 \dot{\bar{p}} &= \bar v \\
 \dot{\bar{v}} &= L\bar{v} + \bar{f}(\bar{p})
\end{split}
\end{equation}
where $L$ is the $(2N-2)\times(2N-2)$ nonzero block of $-R(\mathcal L \otimes I_2)R^\top$ which is negative definite and $\bar f(\bar p)$ contains the nonzero entries of $R f(R^\top \bar p)=R f(p)$.

Our approach to show the stability of \eqref{formation discrete} is as follows: first we define a Malkin structure in Definition \ref{def::malkin strucure}. After that we show in Lemma~\ref{lemma transfer Malkin} that \eqref{eq:system_2} has Malkin structure. Then we develop in Theorem~\ref{lemma dirscrete Malkin} a discrete-time version of the continuous-time Malkin's theorem as invoked by Krick  \cite{Krick2007formation}. Finally, we use these results and show in Theorem~\ref{theorem_main_results} that  \eqref{formation discrete} is stable for sufficiently small values of $\epsilon_1$ and $\epsilon_2$.

\begin{definition}[Malkin structure]\footnote[1]{There are minor differences in the definition of Malkin structure in different references. We use the definition in \cite{Krick2007formation} here.}
\label{def::malkin strucure}
A system has Malkin structure if it is in the form
\begin{equation}
\label{eq:malkin_structure}
\dot r=\left[\begin{array}{cc}
            0 & 0 \\
            0 & A
          \end{array}
\right]r+g(\theta,\rho),
~
r=\left[\begin{array}{c}
                         \theta \\
                         \rho
                       \end{array}
\right],~
g=\left[\begin{array}{c}
                         \Theta(\theta,\rho) \\
                         P(\theta,\rho)
                       \end{array}
\right]
\end{equation}
where $A$ has eigenvalues with negative real parts. Furthermore, $g(\theta,\rho)$ is a second order term with the following conditions i) $g(\theta,0)=0$, ii) there exists
$$h_1(\theta)=\lim_{\rho\rightarrow 0}\frac{\Theta(\theta,\rho)}{\|\rho\|},~~~~~~h_2(\theta)=\lim_{\rho\rightarrow 0}\frac{P(\theta,\rho)}{\|\rho\|},$$
$$b_1=\left\{\begin{array}{c}
               \frac{\Theta(\theta,\rho)}{\|\rho\|} ~if~\rho\neq 0 \\
               h_1(\theta)~~~ if~\rho=0
             \end{array}
\right.,~
b_2=\left\{\begin{array}{c}
               \frac{P(\theta,\rho)}{\|\rho\|} ~if~\rho\neq 0 \\
               h_2(\theta)~~~ if~\rho=0
             \end{array}
\right.$$
such that $b_1$ and $b_2$ are bounded smooth functions and $b_2(0)=0$;
\end{definition}

\begin{lemma}
\label{lemma transfer Malkin}
The system equations in \eqref{eq:system_2} can be transformed to a Malkin structure through a local diffeomorphism around the equilibrium point of \eqref{eq:system_2}.
\end{lemma}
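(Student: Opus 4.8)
The plan is to work in a neighbourhood of the equilibrium of \eqref{eq:system_2} at which both velocity consensus and the correct formation shape hold, namely $(\bar p,\bar v)=(\bar p^\ast,0)$ with $\bar f(\bar p^\ast)=0$. First I would shift this point to the origin and linearise \eqref{eq:system_2}, which yields the first-order system matrix
\[
M=\begin{bmatrix}0 & I\\ J & L\end{bmatrix},\qquad J:=\left.\frac{\partial\bar f}{\partial\bar p}\right|_{\bar p^\ast}.
\]
Because $\bar f$ is the reduced negative gradient of the formation potential $\tfrac12\sum_{(i,j)}(d^{\ast 2}_{ij}-d^2_{ij})^2$, the Jacobian $J$ is symmetric and negative semidefinite at the correct shape; invoking infinitesimal rigidity of the target formation, as in \cite{Krick2007formation}, its kernel is exactly one-dimensional in the reduced coordinates, spanned by the infinitesimal rotation $u$ of the whole formation (translations having already been removed by $R$).

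The key spectral step is to show that $M$ has a simple, non-defective eigenvalue at $0$ with all remaining eigenvalues in the open left half plane. Any eigenpair $(\lambda,x)$ of $M$ satisfies the quadratic pencil $(J+\lambda L-\lambda^2 I)x=0$; taking the inner product with $x$ gives $c\lambda^2-b\lambda-a=0$ with $a:=\langle Jx,x\rangle\le 0$, $b:=\langle Lx,x\rangle<0$ and $c:=\|x\|^2>0$. For $x\notin\ker J$ one has $a<0$, so both roots have strictly negative real part; the value $\lambda=0$ occurs only for $x\in\ker J=\mathrm{span}\{u\}$, and a first-order perturbation argument using $\langle Lu,u\rangle\neq 0$ shows this root vanishes to exactly first order, hence is simple and carries no Jordan block. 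Thus the centre subspace is precisely $\mathrm{span}\{(u,0)\}$ and the complementary $M$-invariant subspace is exponentially stable.

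With this structure in hand I would construct the local diffeomorphism in two stages. The rotated copies of the target formation with zero velocity form a smooth one-dimensional manifold $\mathcal E$ of equilibria through $(\bar p^\ast,0)$, tangent to the centre subspace, and by rigidity these are locally the only equilibria. A straightening (tubular-neighbourhood) change of coordinates $(\bar p,\bar v)\mapsto(\theta,\rho)$ maps $\mathcal E$ onto $\{\rho=0\}$ and the stable subspace onto the $\rho$-axis. Since every point of $\{\rho=0\}$ is an equilibrium, the transformed vector field automatically satisfies $g(\theta,0)=0$, and both the $\theta$-$\theta$ and $\rho$-$\theta$ blocks of the linearisation vanish; any linear-in-$\rho$ coupling into $\dot\theta$ is legitimately absorbed into $\Theta(\theta,\rho)$. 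Choosing the transverse part of the diffeomorphism to diagonalise the stable block then places the $\rho$-dynamics in the form $\dot\rho=A\rho+P$ with $A$ Hurwitz, as established above, which is exactly the linear part required by \eqref{eq:malkin_structure}.

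Finally I would verify the regularity conditions of Definition~\ref{def::malkin strucure}: smoothness of $\Theta$ and $P$ follows from smoothness of $\bar f$; existence of $h_1,h_2$ and boundedness of $b_1,b_2$ follow from $g(\theta,0)=0$ together with $g\in C^2$, so that $g=O(\|\rho\|)$ uniformly over a compact range of $\theta$; and $b_2(0)=0$ holds because the full linear-in-$\rho$ part of $\dot\rho$ at $\theta=0$ has been placed in $A$. I expect the main obstacle to be the spectral step, namely rigorously pinning down that $\ker J$ is one-dimensional and that the zero eigenvalue of $M$ is simple and non-defective, since this is where infinitesimal rigidity of the formation and negative definiteness of $L$ must be combined through the quadratic pencil; the straightening construction and the verification of the Malkin conditions are then comparatively routine.
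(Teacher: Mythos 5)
Your proof is correct in outline but takes a genuinely different route from the paper's. The paper never linearises the second-order system directly: it starts from Krick's result \cite{Krick2007formation} that the \emph{first-order} gradient system $\dot{\bar p}=\bar f(\bar p)$ is carried to Malkin form by some local diffeomorphism $r=\phi(\bar p)$, records the resulting identity $\frac{\partial\phi}{\partial\bar p}\bar f(\bar p)=\mathrm{diag}(0,A)\,r+g(\theta,\rho)$, and then lifts that \emph{same} positional diffeomorphism to the double integrator \eqref{eq:system_2} via $(r,\dot r)$; the linear part is then cleaned up by an explicit block-triangularising similarity $T$, and Hurwitzness of the resulting stable block $E$ is obtained by citing the centre-manifold exponential stability result of \cite{deghat_journal} together with nonsingularity of $E$, rather than proved from scratch. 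You instead build the Malkin coordinates directly: linearise \eqref{eq:system_2} at the equilibrium, analyse the quadratic pencil $(J+\lambda L-\lambda^2 I)x=0$ to show the zero eigenvalue is simple and non-defective (via $\langle Lu,u\rangle\neq 0$) with all remaining eigenvalues strictly stable, and then straighten the one-dimensional equilibrium manifold onto $\{\rho=0\}$ while aligning the stable invariant subspace with the $\rho$-directions. Both routes work. Yours buys self-containedness: the pencil computation replaces the citation of \cite{deghat_journal}, and it makes explicit exactly where infinitesimal rigidity and negative definiteness of $L$ enter. The paper's route buys economy — it reuses Krick's diffeomorphism and its properties wholesale — and keeping the \emph{same} $\phi$ for both the first- and second-order systems is also what lets the commutation argument of Appendix~III mesh with Lemma~\ref{lemma transfer Malkin} in the proof of Theorem~\ref{theorem_main_results}. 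Two points to tighten in your version: (i) a nonzero pencil eigenvalue can also have its eigenvector in $\ker J$ (then $a=0$ and the quadratic gives roots $0$ and $b/c<0$, i.e.\ $Lx=\lambda x$), so that case must be dispatched before concluding that all nonzero eigenvalues are stable; (ii) the one-dimensionality of $\ker J$ and the claim that the rotation orbit is locally the \emph{entire} equilibrium set both rest on infinitesimal rigidity of the target formation, a hypothesis the lemma does not state explicitly but which the paper also inherits implicitly through \cite{Krick2007formation} — it should be flagged as an assumption.
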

The proof is provided in Appendix~I.

\begin{theorem}
\label{lemma dirscrete Malkin}
Consider the time-discretized version of Malkin structure in Definition \ref{def::malkin strucure}, where $\theta_k$ and $\rho_k$ are the $k$th sample of the quantities $\theta$ and $\rho$ in Definition \ref{def::malkin strucure}.
Then there exists a sufficiently small sampling time interval $\epsilon$ (certainly with $\epsilon<1$), and a sufficiently small open ball $\mathcal V$ around the origin such that  if $(\theta_0, \rho_0)$ lies in this open ball, then $(\theta_k, \rho_k)$ lies in the ball for all $k$ and $\rho_k \rightarrow 0$ exponentially fast.
\end{theorem}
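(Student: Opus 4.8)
The plan is to treat the time-discretized Malkin structure as the forward-Euler recursion
\[
\theta_{k+1}=\theta_k+\epsilon\,\Theta(\theta_k,\rho_k),\qquad
\rho_{k+1}=(I+\epsilon A)\rho_k+\epsilon\,P(\theta_k,\rho_k),
\]
and to exploit the asymmetry between the two blocks: the $\rho$-block carries the Hurwitz linear part $A$, whereas the $\theta$-block has no linear part and is driven only through the second-order coupling $\Theta$. Accordingly I would first establish geometric contraction of $\rho_k$ by a discrete Lyapunov argument, and then show that the induced drift of $\theta_k$ is summable and, crucially, \emph{uniformly bounded in $\epsilon$}; together these let the whole trajectory be trapped in a small ball while $\rho_k$ decays.

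For the $\rho$-block, since $A$ is Hurwitz I would solve the Lyapunov equation $A^\top S+SA=-I$ for a symmetric positive-definite $S$ and take $V(\rho)=\rho^\top S\rho$. Using $\|P(\theta,\rho)\|=\|b_2(\theta,\rho)\|\,\|\rho\|$ with $b_2$ continuous and $b_2(0)=0$, I shrink the working ball so that $\|b_2\|\le\delta$ on it. Expanding $V(\rho_{k+1})$ and collecting the $O(\epsilon)$ and $O(\epsilon^2)$ terms gives
\[
V(\rho_{k+1})\le V(\rho_k)-\epsilon\Big(1-\epsilon\|A^\top S A\|-2(\|S\|+\epsilon\|A^\top S\|)\delta-\epsilon\|S\|\delta^2\Big)\|\rho_k\|^2,
\]
so for $\delta$ and $\epsilon$ small the bracket exceeds a positive constant $c$, and using $\|\rho_k\|^2\ge V(\rho_k)/\lambda_{\max}(S)$ one obtains $V(\rho_{k+1})\le\gamma V(\rho_k)$ with $\gamma=1-c\epsilon/\lambda_{\max}(S)\in(0,1)$. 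Hence $V(\rho_k)\le\gamma^kV(\rho_0)$ and $\rho_k\to0$ exponentially.

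For the $\theta$-block the same ball yields $\|\Theta(\theta_k,\rho_k)\|\le B\|\rho_k\|$ with $B=\sup\|b_1\|$, so $\|\theta_{k+1}-\theta_k\|\le\epsilon B\|\rho_k\|$. Summing and inserting $\|\rho_k\|\le(\sqrt\gamma)^k\sqrt{V(\rho_0)/\lambda_{\min}(S)}$ gives
\[
\|\theta_k-\theta_0\|\le \epsilon B\sqrt{\tfrac{V(\rho_0)}{\lambda_{\min}(S)}}\,\frac{1}{1-\sqrt\gamma}.
\]
The decisive observation is that $1-\sqrt\gamma$ is itself of order $\epsilon$ (since $1-\gamma=c\epsilon/\lambda_{\max}(S)$), so the explicit factor $\epsilon$ cancels and $\|\theta_k-\theta_0\|\le C\|\rho_0\|$ with $C$ \emph{independent of $\epsilon$}. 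Thus the entire $\theta$-excursion is controlled by the initial size of $\rho$ and does not blow up as $\epsilon\to0$.

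I would close with a bootstrapping/invariance induction: choose $\mathcal V$ small enough (relative to $\delta$, $B$, $C$) that, so long as the iterates remain in the neighbourhood on which the estimates were derived, the bounds force $(\theta_{k+1},\rho_{k+1})\in\mathcal V$ whenever $(\theta_0,\rho_0),\dots,(\theta_k,\rho_k)\in\mathcal V$; the $\rho$-component contracts geometrically and the $\theta$-component moves by at most $C\|\rho_0\|$. The main obstacle is exactly this coupling: both the Lyapunov decay and the drift bound \emph{presuppose} that the trajectory has not yet left the region where $b_1,b_2$ are small, so the delicate point is to fix the radius of $\mathcal V$ together with the smallness of $\epsilon$ and $\delta$ so that the invariance hypothesis is self-consistent, while keeping the $\theta$-drift estimate uniform in $\epsilon$. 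Once these radii are chosen consistently, invariance and exponential decay of $\rho_k$ follow.
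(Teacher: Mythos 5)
Your proposal is correct and follows essentially the same route as the paper's Appendix~II: a quadratic Lyapunov function giving geometric contraction of $\rho_k$, a geometric-series bound on the cumulative $\theta$-drift (using boundedness of $b_1$ and smallness of $b_2$ near the origin), and a self-consistent choice of a smaller initial ball so that invariance and the decay estimates hold simultaneously. The only cosmetic differences are that the paper obtains the contraction by a similarity transformation making $I+\epsilon A$ a strict Euclidean contraction rather than by solving $A^\top S+SA=-I$, and that your explicit observation that the factor $\epsilon$ in the $\theta$-drift cancels against the $O(1/\epsilon)$ geometric sum is handled more carefully than in the paper, which silently drops that factor in its summation.
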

The proof is provided in Appendix~II.
\begin{theorem}
\label{theorem_main_results}
Consider the system of equations in \eqref{formation discrete} and suppose the graph associated with the velocity measurements is connected and undirected. Then \eqref{formation discrete} is stable for sufficiently small $\epsilon_1$ and $\epsilon_2$.
\end{theorem}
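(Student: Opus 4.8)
The plan is to follow the roadmap already laid out in the text: reduce the discrete system~\eqref{formation discrete} to a uniform time-discretization of a Malkin-structured system, and then invoke Theorem~\ref{lemma dirscrete Malkin}. First I would pass to the reduced coordinates of~\eqref{eq::transform p p bar}. Since the first two rows of $R$ span the center-of-mass/consensus directions, the components $(p_0,v_0)$ decouple ($\dot v_0=0$, and $p_0$ drifts at the constant common velocity), and the remaining dynamics are exactly~\eqref{eq:system_2}. Connectivity of the undirected measurement graph enters here: it forces $\mathcal L$ to have a single zero eigenvalue, so the block $L$ is negative definite, which is ultimately what makes the stable block $A$ of the Malkin form Hurwitz. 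In these coordinates~\eqref{formation discrete} reads
\begin{equation}
\begin{split}
\bar p\big((k+1)T\big)&=\bar p(kT)+T\bar v(kT),\\
\bar v\big((k+1)T\big)&=\bar v(kT)+\epsilon_1 T L\bar v(kT)+\epsilon_2 T\bar f\big(\bar p(kT)\big),
\end{split}
\end{equation}
the position update being exact because $v_i$ is constant on each interval, hence identical to a forward-Euler step of length $T$.

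Second, I would remove the mismatch between the fixed interval $T$ and the two gains $\epsilon_1,\epsilon_2$ by rescaling the velocity. Writing $\bar v=\beta u$ and choosing $\beta$ together with the gain ratio so that $\epsilon:=T\beta$, $\epsilon_1/\beta$ and $\epsilon_2/\beta^2$ are held fixed, the map in $(\bar p,u)$ becomes a \emph{uniform} forward-Euler discretization, of step $\epsilon$, of a continuous system of exactly the same form as~\eqref{eq:system_2} with positive gains. Positive scaling of the two right-hand-side terms changes neither the equilibrium set nor the block structure, so Lemma~\ref{lemma transfer Malkin} applies verbatim: the same local diffeomorphism $\Phi:(\bar p,u)\mapsto(\theta,\rho)$ carries this continuous system into the Malkin structure~\eqref{eq:malkin_structure}, with $\theta$ the neutral direction generated by the residual rotational symmetry of the target shape and $\rho$ the transverse shape-error and velocity-disagreement coordinates.

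Third, I would transport the discrete one-step map $M$ through $\Phi$. Because $\Phi$ is a diffeomorphism, the conjugated map $\Phi\circ M\circ\Phi^{-1}$ agrees with the forward-Euler discretization of the Malkin-structured field up to terms of order $\epsilon^2$; these corrections are smooth and second order in $\rho$, so they are absorbed into the term $g(\theta,\rho)$ and do not violate the hypotheses of Definition~\ref{def::malkin strucure}. Theorem~\ref{lemma dirscrete Malkin} then guarantees, for all sufficiently small $\epsilon$ (hence sufficiently small $\epsilon_1,\epsilon_2$) and all data in a neighbourhood of the equilibrium, that the iterates stay bounded and $\rho_k\to0$ exponentially. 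Undoing $\Phi$ and the rescaling, $\rho_k\to0$ means the velocity disagreements and shape errors vanish, i.e. velocity consensus and formation-shape control are attained, while the neutral $\theta_k$ remains bounded; this is exactly the (local) stability of~\eqref{formation discrete} claimed.

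The step I expect to be the main obstacle is the second one: honestly reconciling the single small sampling interval $\epsilon$ demanded by Theorem~\ref{lemma dirscrete Malkin} with the fact that $T$ is fixed, and indeed must be taken large for the Fourier-based estimation of Section~\ref{sec:estimation_translational}, while $\epsilon_1,\epsilon_2$ are the only genuinely free small parameters. The rescaling resolves this but couples $\epsilon_2$ to $\epsilon_1$; checking that the conclusion persists when $\epsilon_1,\epsilon_2$ are shrunk independently, and that the $O(\epsilon^2)$ conjugation error cannot overcome the exponential contraction supplied by the Hurwitz block $A$, is the delicate part of the argument.
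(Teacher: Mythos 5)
Your proposal is correct and follows essentially the same route as the paper's own proof: the velocity rescaling $\bar v=\beta u$ with $\epsilon=T\beta$ is precisely the paper's time-scale correspondence \eqref{eq:timescalechange} identifying \eqref{formation discrete} with a small-step discretisation of \eqref{eq:alphabetacts}, after which both arguments invoke Lemma~\ref{lemma transfer Malkin}, the commutativity of diffeomorphism and discretisation (the paper's Appendix~III), and Theorem~\ref{lemma dirscrete Malkin}. Even the delicate point you flag --- that the scaling couples $\epsilon_1$ and $\epsilon_2$ rather than letting them shrink independently --- is present in identical form in the paper, which handles it by fixing $\alpha,\beta$ first and then choosing the step $h$.
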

\begin{proof}
To show the stability of \eqref{formation discrete} for suitable $\epsilon_i$, we initially study certain variants on \eqref{eq:system_2} and examine their stability.
First, if the second equation of \eqref{eq:system_2} is replaced  for some positive $\alpha,\beta$ by
\begin{equation}\label{eq:alphabetacts}
  \dot{\bar{v}} = \alpha L\bar{v} +\beta \bar{f}(\bar{p})
\end{equation}
the convergence properties are unaffected. Of course, the speed of convergence is changed.

Second, if \eqref{eq:alphabetacts} is replaced  for any $\epsilon>0$ by
\begin{equation}\label{eq:alphabetaepsiloncts}
\begin{split}
 \dot{\hat{p}} &=  \hat v \\
 \dot{\hat{v}} &= \epsilon \alpha L\hat{v} + \epsilon^2 \beta \bar{f}(\hat{p})
\end{split}
\end{equation}
or alternatively by
\begin{equation}\label{eq:pw}
\begin{split}
 \dot{\hat{p}} &= \epsilon\hat w \\
 \dot{\hat{w}} &= \epsilon \alpha L\hat{w} + \epsilon \beta \bar{f}(\hat{p})
\end{split}
\end{equation}
then any solution of (\ref{eq:alphabetacts}) gives rise to solutions of (\ref{eq:alphabetaepsiloncts}) and (\ref{eq:pw}) and vice versa through
\begin{equation}\label{eq:timescalechange}
\left[\begin{array}{c}
\bar p(\epsilon t)\\
\epsilon \bar v(\epsilon t)\\
\end{array}\right]
=
\left[\begin{array}{c}
\hat p(t)\\
\hat v(t)\\\end{array}\right]
=
\left[\begin{array}{c}
\hat p(t)\\
\epsilon \hat w(t)\\
\end{array}\right]
\end{equation}
The discrete-time equation with which we are working  in \eqref{formation discrete} is a discretisation of (\ref{eq:alphabetaepsiloncts}) (after some transformation and with appropriate identification of $\epsilon_1,\epsilon_2$). Now if the original equation (\ref{eq:alphabetacts}) is approximated by a difference equation with sampling interval $h$, this is equivalent to sampling (\ref{eq:pw}) with sampling interval $h/\epsilon$, or sampling (\ref{eq:alphabetaepsiloncts}) with the same sampling interval. In particular, if $h$ is such that discretisation of (\ref{eq:alphabetacts}) gives solutions which converge exponentially fast to the center manifold associated with that equation, then with discretisation interval $h/\epsilon$, solutions of the discretised version of (\ref{eq:alphabetaepsiloncts})  or (\ref{eq:pw}) will also converge exponentially fast to the center manifold. In particular,  if $\epsilon$ is chosen so that $h/\epsilon=T$, then for that value of $\epsilon$ and with the sampling interval $T$, the desired convergence will occur. In summary, if $\alpha,\beta$ are prescribed, and if a sampling interval $h$ is chosen so that the discretised version of (\ref{eq:alphabetacts}) converges to the centre manifold, then taking $\epsilon=h/T, \epsilon_1=\alpha\epsilon, \epsilon_2=\beta\epsilon^2$ will be satisfactory in \eqref{formation discrete}.
Of course, $\alpha=\beta=1$ is legitimate; with $\epsilon$ small, the values of $\epsilon_1,\epsilon_2$ will be such that velocity consensus is effectively achieved before the correct shape. This is intuitively reasonable.

So the question arises as to whether discretisation of \eqref{eq:alphabetacts} with a sufficiently small sampling interval will give convergence. Lemma \ref{lemma transfer Malkin} shows that \eqref{eq:system_2} can be transformed to a Malkin structure and therefore \eqref{eq:alphabetacts}. Furthermore, Theorem~\ref{lemma dirscrete Malkin} shows that the discretization of Malkin structure with a sufficiently small sampling interval will give convergence.
We further show in Appendix~III that the operations of coordinate basis change through a diffeomorphism to a Malkin equation and time-discretization commute.
Therefore, the theorem is proved.
\end{proof}


\subsection{Simulation Results combining Velocity Consensus and formation shape control}
Consider a three-agent system where each agent can measure it distance to the other two agents. The goal is to achieve velocity consensus and form a triangular formation. Suppose $\omega_i$ is the angular velocity of agent $i$, $T$ is the sampling time interval, $(v_{xi},v_{yi})$ is the translational velocity of agent $i$ and $(p_{xi},p_{yi})$ is the position of circle centre of agent $i$. In the simulation, we set $\omega_1=5$, $\omega_2=-3$, $\omega_3=7$, $T=2\pi$. When $t=0$, $(v_{x1},v_{y1})=(-4,1.5)$, $(v_{x2},v_{y2})=(3,-3.5)$, $(v_{x3},v_{y3})=(2,3.5)$, $(p_{x1},p_{y1})=(100,50)$, $(p_{x2},p_{y2})=(0,80)$ ,$(p_{x3},p_{y3})=(0,0)$, $\epsilon_1=5\times 10^{-2}$ and $\epsilon_2=7\times 10^{-7}$. The desired distance between each pair of agents in the formation is 20. The trajectories of the agents are shown in Fig. \ref{Formation1} and the velocities of agents are shown in Fig. \ref{Formation2}.
\begin{figure}[h]
\begin{center}
\includegraphics[width=0.35\paperwidth]{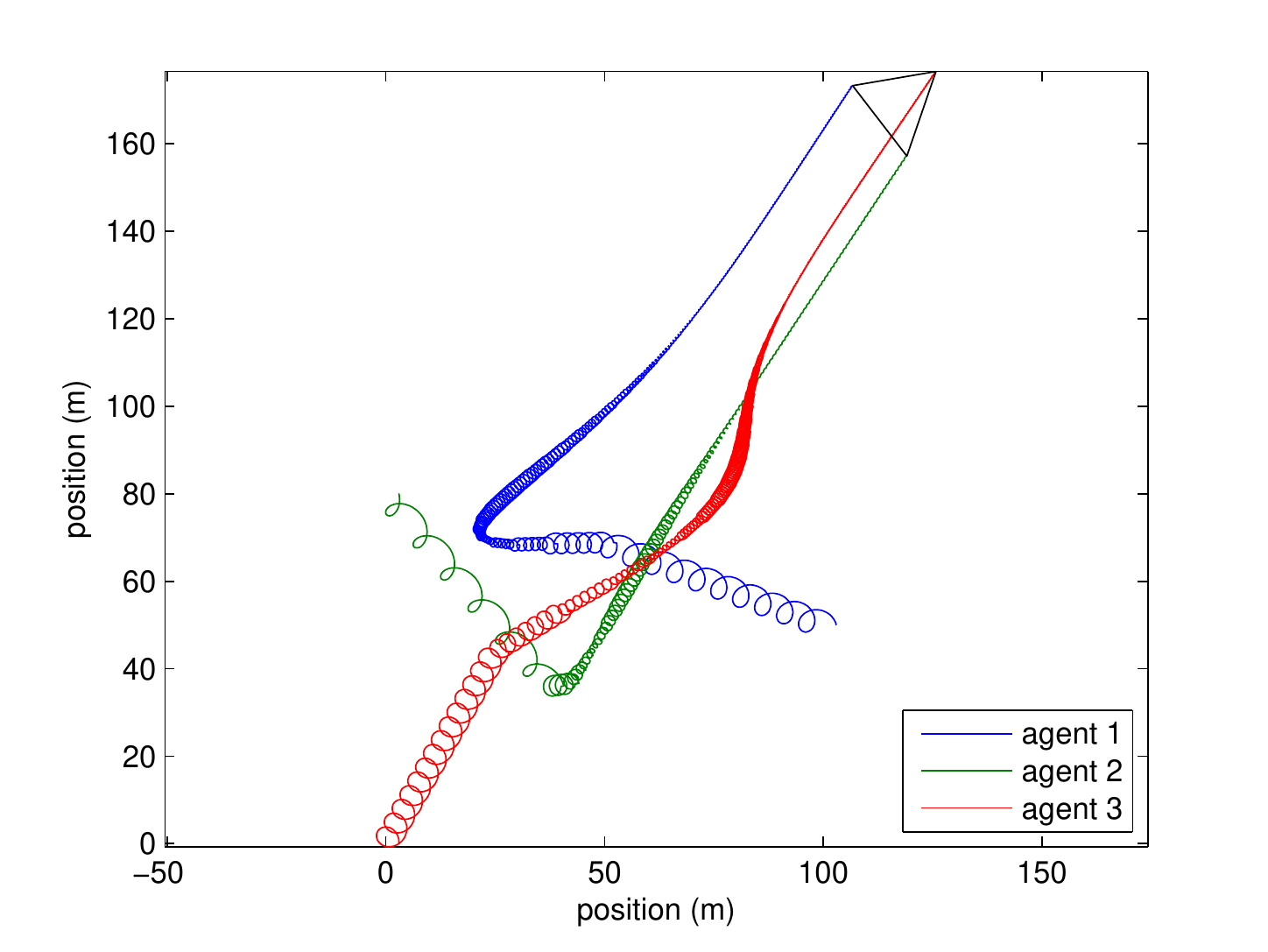}
\end{center}
\caption{Results of combining velocity consensus and formation shape control with adaptive radius setting: The trajectories of agents}
\label{Formation1}
\end{figure}

\begin{figure}[h]
\begin{center}
\includegraphics[width=0.4\paperwidth]{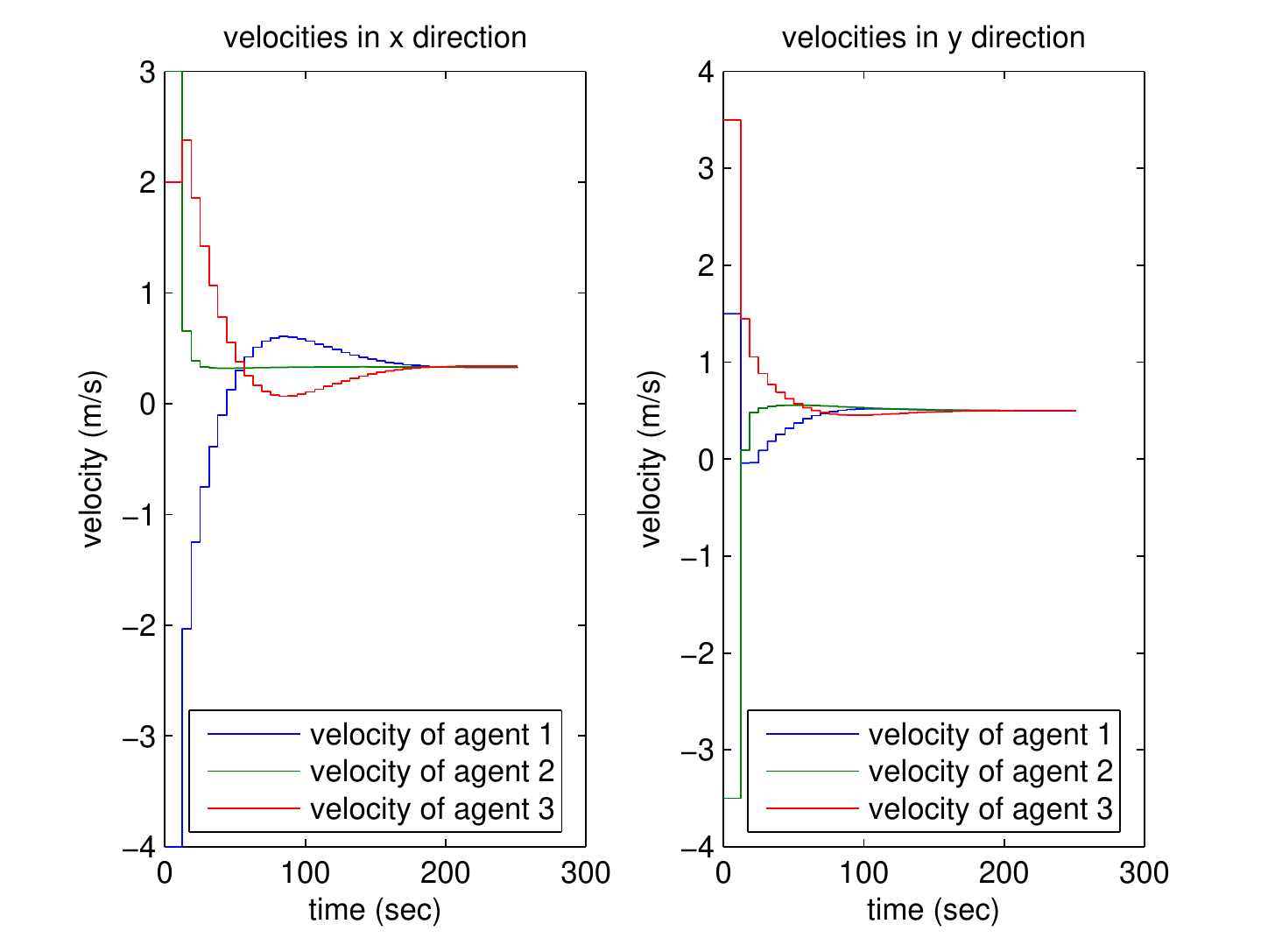}
\end{center}
\caption{Results of combining velocity consensus and formation shape control with adaptive radius setting: The translational velocities of agents}
\label{Formation2}
\end{figure}

\section{Conclusion}
\label{sec:Conclusion}
In this paper, we proposed a strategy to achieve velocity consensus and formation control using distance-only measurements for multiple agents. Given the fact that for agents to execute arbitrary motions, instantaneous distance-only measurements cannot provide enough information for achieving velocity consensus and formation control, we studied agents performing a combination of circular motion and linear motion.

In further research, we are looking to achieve formation control and velocity consensus using agents' perturbations, such that agents are not limited to perform a combination of circular motion and linear motion. In addition, it appears very likely that the same strategy as we proposed in this paper can be used in velocity consensus using bearing-only measurements.

\section*{Appendix I}
\label{proof_lemma transfer Malkin}
\textbf{Proof of Lemma \ref{lemma transfer Malkin}:}
Suppose there are $N$ agents in a formation. Consider the formation control system
\begin{equation}
\label{eq:before malkin 1}
\dot{p}=f(p)
\end{equation}
where $f(p)$ is a vector with entries
$$\sum_{j\in \mathcal{N}_i}(d^{*2}_{ij}-d^{2}_{ij})(p_i-p_j),~~~i=1.\cdots,N$$  and $d_{ij}$, $d_{ij}^*$, $p_i$ and $p_j$ are as defined in \eqref{formation conti}.
It is shown in \cite{Krick2007formation} that there is a local diffeomorphism around
the equilibrium point that transforms \eqref{eq:before malkin 1} to a Malkin structure.

Because there is a linear mapping between this $p$ and the $\bar p$ defined in \eqref{eq::transform p p bar} and the text below \eqref{eq::transform p p bar}, we know there is also a diffeomorphism
\begin{equation}
\label{eq::diffeomophism}
r=\phi(\bar p),~~~~~\bar p=\psi (r)
\end{equation}
which transforms $\dot{\bar{p}}=\bar f(\bar p)$, where $\bar f(\cdot)$ is defined in the text below \eqref{eq:system_2}, to a Malkin structure.
\begin{equation}
\label{eq:malkin_structure system}
\dot r=\left[\begin{array}{cc}
            0 & 0 \\
            0 & A
          \end{array}
\right]r+g(\theta,\rho),
~
r=\left[\begin{array}{c}
                         \theta \\
                         \rho
                       \end{array}
\right],~
g=\left[\begin{array}{c}
                         \Theta(\theta,\rho) \\
                         P(\theta,\rho)
                       \end{array}
\right]
\end{equation}
where $A$ has eigenvalues with negative real parts and $g(\theta,\rho)$ fulfills the conditions of second order term $g(\cdot)$ in Definition \ref{def::malkin strucure}. Let $n_\theta$ denote the number of elements in $\theta$ and $n_\rho$ denote the number of elements in $\rho$.

Observe
\begin{equation}\label{eq:derivative of r}
\dot{r}=\frac{\partial\phi}{\partial \bar p}\dot{\bar p}=\frac{\partial\phi}{\partial \bar p}\bar f(\bar p)
=\Big(\frac{\partial\psi}{\partial r}\Big)^{-1}{\bar f}(\psi(r)).
\end{equation}
Now the right side of \eqref{eq:malkin_structure system} and \eqref{eq:derivative of r} are the same. This is because equation \eqref{eq:malkin_structure system} is the formation control system after applying a local diffeomorphism. As a result, we obtain a further equation linking the function $\bar f(\cdot)$, $g(\cdot,\cdot)$, $\phi(\cdot)$ and the matrix $A$.
\begin{equation}\label{eq:property f}
  \frac{\partial\phi}{\partial \bar p}\bar f(\bar p)=\left[\begin{array}{cc}
            0 & 0 \\
            0 & A
          \end{array}
\right]r+g(\theta,\rho)
\end{equation}
Note the above equation reflects a property regarding $\bar f(\cdot)$ and the diffeomorphism \eqref{eq::diffeomophism}. Therefore, it is available as to draw on in considering the transformation of the second order system.

Now we are going to show that the velocity and formation shape control problem in \eqref{eq:system_2}
\begin{equation}
\label{eq:before malkin 2}
\ddot{\bar p}=L\dot{\bar p}+\bar f(\bar p),~L=L^T<0
\end{equation}
is also transformed to a Malkin structure by the same diffeomorphism stated in \eqref{eq::diffeomophism}. In the second order system, we are using the same diffeomorphism stated in \eqref{eq::diffeomophism}, but we need to  additionally know how $\dot{\bar p}$ is mapped. It is easy to obtain

\begin{equation}\label{eq:first derivative of r}
\dot{r}=\frac{\partial\phi}{\partial \bar p}\dot{\bar p}
\end{equation}
and then
\begin{equation}\label{eq:second derivative of r}
\ddot{r}=\Phi(\bar p,\dot{\bar p})+\frac{\partial\phi}{\partial \bar p}\ddot{\bar p}
\end{equation}
where the row $i$ column $j$ entry of $\Phi({\bar p},\dot{\bar p})$ takes the form $$\sum_k \frac{\partial^2\phi_i}{\partial {\bar p}_j \partial {\bar p}_k}\dot{{\bar p}}_j\dot{{\bar p}}_k.$$
Combining \eqref{eq:second derivative of r} and \eqref{eq:before malkin 2}, we obtain
\begin{equation}
\ddot{r}=\Psi(r,\dot{r})+\frac{\partial\phi}{\partial {\bar p}}L\dot{\bar p}+\frac{\partial\phi}{\partial {\bar p}}\bar f({\bar p})
\end{equation}
where $\Psi(r,\dot{r})=\Phi\left(\psi(r),\frac{\partial \psi}{\partial r} \dot{r}\right)$ is $O(\|\dot{r}\|^2)$.
Together with \eqref{eq:property f}, which as noted above remains valid, we obtain
\begin{equation}
\ddot{r}=\Psi(r,\dot{r})+\Big(\frac{\partial \psi}{\partial r}\Big)^{-1}L\Big(\frac{\partial \psi}{\partial r}\Big)\dot{r}+\left[
                                \begin{array}{cc}
                                  0 & 0 \\
                                  0 & A \\
                                \end{array}
                              \right]r+g(\theta,\rho).
\end{equation}
Now we have the system equation
\begin{equation}
\label{Malkin system equation}
\frac{d}{dt}\left[
              \begin{array}{c}
                r \\
                \dot{r} \\
              \end{array}
            \right]
=C
\left[
  \begin{array}{c}
    r \\
    \dot{r} \\
  \end{array}
\right]
+\left[
   \begin{array}{c}
     0 \\
     \Psi(r,\dot{r})+h\cdot \dot{r}+g(\theta,\rho) \\
   \end{array}
 \right]
\end{equation}
where
$$C=\left[
   \begin{array}{cc}
     0 & I \\
     \left[
        \begin{array}{cc}
        0 & 0 \\
        0 & A \\
        \end{array}
     \right] & \Big(\frac{\partial \psi}{\partial r}\Big)^{*-1}L\Big(\frac{\partial \psi}{\partial r}\Big)^* \\
   \end{array}
 \right]$$
and
$$h=-\Big(\frac{\partial \psi}{\partial r}\Big)^{*-1}L\Big(\frac{\partial \psi}{\partial r}\Big)^*+\Big(\frac{\partial \psi}{\partial r}\Big)^{-1}L\Big(\frac{\partial \psi}{\partial r}\Big)$$
where $^*$ denotes the value at system equilibrium. Observe that $C$ takes the following form
\begin{equation}\label{eq:the form of C}
C=\left[\begin{array}{cc}
  0 & D \\
  0 & E \\
\end{array}\right]
\end{equation}
with $E$ a $(n_\theta+2n_\rho)\times (n_\theta+2n_\rho)$ nonsingular square matrix.

Suppose there is a nonsingular similarity transformation $T=\left[
                                                          \begin{array}{cc}
                                                            I & -DE^{-1} \\
                                                            0 & I \\
                                                          \end{array}
                                                        \right]
$ and define $\left[
                \begin{array}{c}
                  \bar{r} \\
                  \dot{r} \\
                \end{array}
              \right]=T\left[
                \begin{array}{c}
                  r \\
                  \dot{r} \\
                \end{array}
              \right]
$. Note $\bar{r}=\left[
           \begin{array}{c}
             \bar\theta \\
             \bar\rho \\
           \end{array}
         \right]
$. There holds
\begin{equation}
\label{Malkin system transferred}
\frac{d}{dt}\left[
              \begin{array}{c}
                \bar\theta \\
                \left[
                  \begin{array}{c}
                    \bar\rho \\
                    \dot{r} \\
                  \end{array}
                \right] \\
              \end{array}
            \right]
=\left[
   \begin{array}{cc}
     0 & 0 \\
     0 & E \\
   \end{array}
 \right]
\left[
 \begin{array}{c}
   \bar\theta \\
   \left[
    \begin{array}{c}
     \bar\rho \\
     \dot{r} \\
    \end{array}
   \right] \\
 \end{array}
\right]+T\cdot o(\bar{r},\dot{r})
\end{equation}
with $\left[
   \begin{array}{cc}
     0 & 0 \\
     0 & E \\
   \end{array}
 \right]=TCT^{-1}$.

Krick points out the general conclusion in \cite{Krick2007formation} that first applying a diffeomorphism $r=\phi(\bar p)$ and then linearizing the transformed system is equivalent to first linearizing the system and then applying the diffeomorphism. While this idea was applied to the single integrator system, it remains valid for the double integrator system. It is shown in \cite{deghat_journal} that the system \eqref{eq:before malkin 2} is locally exponentially stable on a centre manifold, therefore the linearization of the system equation \eqref{eq:before malkin 2} at a point on the center manifold $\left[
                                                              \begin{array}{cc}
                                                                0 & I \\
                                                                \left(\frac{\partial {\bar f}}{\partial {\bar p}}\right)^* & L \\
                                                              \end{array}
                                                            \right]
$ has eigenvalues with non-positive real parts. Furthermore, because the local diffeomorphism is smooth, its linearization around the equilibrium
$\left[
\begin{array}{cc}
\frac{\partial \phi}{\partial p} & 0 \\
0 & \frac{\partial \phi}{\partial p} \\
\end{array}
\right]
$ is a non-singular similarity transformation. Therefore, the linearization of the system equation after applying the diffeomorphism
$$
C=\left[
\begin{array}{cc}
\frac{\partial \phi}{\partial p} & 0 \\
0 & \frac{\partial \phi}{\partial p} \\
\end{array}
\right]
\left(\left[
                                                              \begin{array}{cc}
                                                                0 & I \\
                                                                \left(\frac{\partial {\bar f}}{\partial {\bar p}}\right)^* & L \\
                                                              \end{array}
                                                            \right]\right)
\left[\begin{array}{cc}
\frac{\partial \phi}{\partial p} & 0 \\
0 & \frac{\partial \phi}{\partial p} \\
\end{array}
\right]^{-1}
$$
also has eigenvalues with non-positive real parts. With \eqref{eq:the form of C} and the fact that $A$ and $L$ are  both full rank, we know $E$ is a non-singular square matrix. Thus $E$ has eigenvalues with negative real parts.

Define $$l(\bar\theta,\bar\rho,\dot r)=o(\bar{r},\dot{r})=\Psi(r,\dot{r})+h\cdot \dot{r}+g(\theta,\rho)$$
because
\begin{enumerate}
  \item $\Psi$ is $O(\|\dot{r}\|^2)$
  \item $h=0$ when $\bar\rho=0$ and $\dot r=0$
  \item $g(\theta,\rho)$ fulfills the conditions of second order term $g(\cdot)$ in Definition \ref{def::malkin strucure}
\end{enumerate}
we can conclude (in relation to the double integrator system) that $l(\bar\theta,\bar\rho,\dot r)$ fulfills the conditions for the second order term $g(\cdot)$ of Definition \ref{def::malkin strucure}.
Therefore, we have completed the proof.
\section*{Appendix II}
\label{proof_lemma dirscrete Malkin}
\textbf{Proof of Theorem \ref{lemma dirscrete Malkin}:}

The time discretized version of Malkin structure takes the following form
\begin{equation}
\label{systems equation malkin}
\left[\begin{array}{c}
                         \theta_{k+1} \\
                         \rho_{k+1}
                       \end{array}
\right]=\left(1+\epsilon\left[\begin{array}{cc}
            0 & 0 \\
            0 & A
          \end{array}
\right]\right)\left[\begin{array}{c}
                         \theta_k \\
                         \rho_k
                       \end{array}
\right]+\epsilon \left[\begin{array}{c}
                         \Theta(\theta_k,\rho_k) \\
                         P(\theta_k,\rho_k)
                       \end{array}
\right]
\end{equation}
where $A$ has eigenvalues with negative real parts, $\Theta(\theta,0)=0$ and $P(\theta,0)=0$.
Define
$$h_1(\theta_k)=\lim_{\rho_k\rightarrow 0}\frac{\Theta(\theta_k,\rho_k)}{\|\rho_k\|},~~~~~~h_2(\theta_k)=\lim_{\rho_k\rightarrow 0}\frac{P(\theta_k,\rho_k)}{\|\rho_k\|},$$

$$b_1=\left\{\begin{array}{c}
               \frac{\Theta(\theta_k,\rho_k)}{\|\rho_k\|} ~if~\rho_k\neq 0 \\
               h_1(\theta_k)~~~ if~\rho_k=0
             \end{array}
\right.,~
b_2=\left\{\begin{array}{c}
               \frac{P(\theta_k,\rho_k)}{\|\rho_k\|} ~if~\rho_k\neq 0 \\
               h_2(\theta_k)~~~ if~\rho_k=0
             \end{array}
\right.$$

Because $\lim_{\rho\rightarrow 0}\frac{P(0,\rho)}{\|\rho\|}=0$, we know that $b_2(0)=0$. Since $A$ has eigenvalues with negative real parts, for all sufficiently small $\tau>0$, the matrix $A_d:=I+\tau A$ will have eigenvalues inside the unit circle. Without loss of generality, we may assume (using a nonsingular similarity transformation $T$ if necessary, corresponding to a replacement of $\rho_k$ by $T\rho_k$) that for some $\gamma>0$, there holds
\begin{equation}
I-A_d^{\top}A_d\geq\gamma I
\end{equation}
Now set $V(\rho_k)=\rho_k^{\top}\rho_k$.  Also, note that given any $\sigma>0$, there exists $\eta(\sigma)$  and  a closed ball $\bar {\mathcal B}_{\eta}$, without loss of generality contained in $\mathcal V$,  such that
\begin{equation}
||b_1(\rho_k,\theta_k)||\leq \sigma\;\forall (\rho_k,\theta_k)\in\bar {\mathcal B}_{\eta}
\end{equation}
Now observe that for $(\rho_k,\theta_k) \in\mathcal B_{\eta}$ there holds
\begin{eqnarray}
&&V(\rho_{k+1})-V(\rho_k)\\\nonumber
&=&\rho_k^{\top}(I-A_d^{\top}A_d)\rho_k+2\tau \rho_k^{\top}A_d^{\top}P(\rho_k,\theta_k)+\\\nonumber
&&\tau^2||P(\rho_k,\theta_k)||^2\\\nonumber
&\leq&-\gamma \rho_k^{\top}\rho_k+2\tau ||A_d||||\rho_k||^2||b_1(\rho_k,\theta_k)||\\\nonumber
&&+\tau^2||||\rho_k||^2||b_1(\rho_k,\theta_k)||^2\\\nonumber
&\leq&(-\gamma+2\tau\sigma+\tau^2\sigma^2)||\rho_k||^2\\\nonumber
&\leq&(-\gamma+2\sigma+\sigma^2)||\rho_k||^2\\\nonumber
\end{eqnarray}
Restrict $\sigma$ to be small enough that $2\sigma+\sigma^2<\gamma/2$. Then we achieve:
\begin{equation}
V(\rho_{k+1})-V(\rho_k)\leq-(\gamma/2)V(\rho_k)
\end{equation}
and
\begin{equation}\label{eq:r}
||\rho_{k+1}||^2\leq(1-(\gamma/2))||\rho_k||^2
\end{equation}
Provided that the sequence $(\rho_k,\theta_k)$ remains in ${\mathcal B}_{\eta}$, exponential convergence to zero of $\rho_k$ is achieved. We shall now argue that this can be assured through appropriate selection of the initial condition.  Suppose to obtain a contradiction that there exists a finite $K$ such that $(\rho_k,\theta_k)\in{\mathcal B}_{\eta} \forall k\in[0,K]$ but the condition fails for $k=K+1$.  Suppose that the function $b_2$, which is continuous, attains an upper bound of $\bar m$ on $\bar {\mathcal B}_{\eta}$. Observe that for all $k\in[0,K]$,
\begin{eqnarray}
||\Theta(\rho_k,\theta_k)||&=&||\rho_k||||b_2(\rho_k,\theta_k)||\\\nonumber
&\leq&\bar m||\rho_0||(1-(\gamma/2))^k
\end{eqnarray}
which implies by summation that
\begin{equation}\label{eq:s}
||\theta_{k+1}||\leq \bar m||\rho_0||\frac{1}{1-(\gamma/2)}+||\theta_0||
\end{equation}
Now restrict the initial condition $(\rho_0,\theta_0)$ to lie-in a smaller ball than $\mathcal B_{\eta}$. Define a $\eta_0<\eta$ as a positive quantity satisfying
\begin{equation}
\eta_0+\bar m\frac{1}{1-(\gamma/2)}\eta_0+\eta_0<\eta
\end{equation}
and suppose that $(\rho_0,\theta_0)\in \mathcal B_{\eta_0}$. Then while the trajectory $(\rho_k,\theta_k)$ remains in $\mathcal B_{\eta}$, ie. for all $k\in[0,K]$ with $K$ maximal, we know using (\ref{eq:r}), (\ref{eq:s}) that
\begin{eqnarray}
&&||(\rho_{k+1},\theta_{k+1})||\leq ||\rho_{k+1}||+||\theta_{k+1}||\\\nonumber
&\leq& \eta_0+\bar m\frac{1}{1-(\gamma/2)}\eta_0+\eta_0<\eta
\end{eqnarray}
This shows that $(\rho_{K+1},\theta_{K+1})\in\mathcal B_{\eta}$, and that $K$ is not maximal, i.e. there cannot be a finite $K$. Hence exponential convergence of the sequence $\rho_k$ and convergence of the sequence $\theta_k$ is established.

\section*{Appendix III}
\begin{lemma}
\label{lemma Commutativity}
Consider a differential equation $\dot p=f(p)$, with the property that a coordinate change through the diffeomorphism $r=\phi(p)$ produces a differential equation set in Malkin form. Suppose that this set is then time-discretized to obtain a discrete-time Malkin equation. Consider also the time-discretization of the equation $\dot p=f(p)$ followed by use of the diffeomorphic coordinate change $r=\phi(p)$. Then the transformed discrete-time equation is the same as that obtained as the discrete-time Malkin equation referred to above, i.e. the operations of diffeomorphic coordinate change to a Malkin equation and time-discretization commute.
\end{lemma}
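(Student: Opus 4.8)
The plan is to exploit the covariance of the exact flow of a vector field under a diffeomorphic change of coordinates. If $r=\phi(p)$ carries $\dot p=f(p)$ into the Malkin system $\dot r=\tilde f(r)$, with $\tilde f(r)=\frac{\partial\phi}{\partial p}\big|_{p=\psi(r)}f(\psi(r))$ exactly as in \eqref{eq:derivative of r} (here $\psi=\phi^{-1}$), then the two flows are conjugate. Writing $\Phi^{f}_{t}$ and $\Phi^{\tilde f}_{t}$ for the time-$t$ flow maps of the two fields, the standard fact that a diffeomorphism conjugating two vector fields also conjugates their flows gives $\Phi^{\tilde f}_{t}=\phi\circ\Phi^{f}_{t}\circ\phi^{-1}$ for every $t$. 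This identity is the engine of the whole argument, and I would establish it first, either by quoting the conjugacy theorem or by setting $r(t)=\phi(\Phi^{f}_{t}(p))$, differentiating in $t$, and checking via the chain rule that $r(t)$ solves $\dot r=\tilde f(r)$ with $r(0)=\phi(p)$.

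Given the structure of \eqref{formation discrete}, where $v_{i}$ is held constant across each interval so that $p_{i}$ is integrated exactly, I would take the relevant time-discretization to be exact sampling of the flow at the instants $t=kT$. Under this reading the two paths are immediate. In the first path I transform and then discretize: the sampled Malkin equation is $r_{k+1}=\Phi^{\tilde f}_{T}(r_{k})$. In the second path I discretize and then transform: sampling $\dot p=f(p)$ gives $p_{k+1}=\Phi^{f}_{T}(p_{k})$, and substituting $r_{k}=\phi(p_{k})$, i.e. $p_{k}=\phi^{-1}(r_{k})$, yields
\begin{equation}
r_{k+1}=\phi(p_{k+1})=\phi\big(\Phi^{f}_{T}(p_{k})\big)=\phi\big(\Phi^{f}_{T}(\phi^{-1}(r_{k}))\big).
\end{equation}
By the conjugacy identity the right-hand side equals $\Phi^{\tilde f}_{T}(r_{k})$, so the two discrete-time systems coincide, which is exactly the asserted commutativity.

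Because the scheme actually used is a one-step difference approximation rather than the exact flow, I expect the main obstacle to lie in the residual between a generic discretization and exact sampling. For a consistent one-step map $\Psi_{\epsilon}(p)=p+\epsilon f(p)+O(\epsilon^{2})$, a Taylor expansion gives $\phi(\Psi_{\epsilon}(\phi^{-1}(r)))=r+\epsilon\,\tilde f(r)+O(\epsilon^{2})$, so the two paths agree to first order in the step but differ by a smooth $O(\epsilon^{2})$ term; the substantive work is to show this discrepancy does not corrupt the Malkin structure. Since $f$, $\phi$ and $\phi^{-1}$ are smooth and any consistent scheme fixes the equilibria (where $f=0$), the residual is a smooth, quadratically small quantity that vanishes to the required order on the centre subspace $\{\rho=0\}$; it can therefore be merged into the second-order term $g(\theta,\rho)$ while preserving conditions i) and ii) of Definition \ref{def::malkin strucure}, in particular $b_{2}(0)=0$. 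Consequently the convergence conclusion of Theorem \ref{lemma dirscrete Malkin} is unaffected, and the two discrete-time equations agree in the only sense that matters for the stability argument of Theorem \ref{theorem_main_results}. I would close by remarking that this is precisely the discrete-time analogue of the ``linearization commutes with the diffeomorphism'' fact already invoked after \eqref{Malkin system transferred}.
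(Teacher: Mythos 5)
Your proposal is correct, and its decisive step coincides with the paper's own argument, but you package it differently and it is worth spelling out the difference. The paper works directly with the Euler map $p_{k+1}=p_k+\epsilon f(p_k)$: it Taylor-expands $\phi(p_k+\epsilon f(p_k))=r_k+\epsilon J_\phi f(p_k)+o(\epsilon)$, observes that the linear part is of Malkin form, and then closes the argument with exactly the observation you make at the end — since every point $\psi(\theta,0)$ is an equilibrium of $\dot p=f(p)$, we have $f(\psi(\theta_k,0))=0$, hence $\rho_k=0$ implies $r_{k+1}=\phi(p_k)=r_k$, so all the higher-order-in-$\epsilon$ remainder terms vanish on the centre subspace and the Malkin conditions of Definition~\ref{def::malkin strucure} survive discretization. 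Your first half — the flow-conjugacy identity $\Phi^{\tilde f}_{t}=\phi\circ\Phi^{f}_{t}\circ\phi^{-1}$ and exact sampling at $t=kT$ — is an additional framing the paper does not use; it buys \emph{exact} commutativity, but only for a discretization (exact sampling of the flow) that is not the one actually employed in \eqref{formation discrete} or in the paper's proof. Your second half, treating the consistent one-step scheme as an $O(\epsilon^2)$ perturbation of the sampled flow whose residual is smooth and vanishes on $\{\rho=0\}$ (so it can be absorbed into $g(\theta,\rho)$ while keeping $b_2$ bounded and $b_2(0)=0$), is the substantive content, and it is essentially the paper's proof. One virtue of your version: you make explicit that for the Euler scheme the two routes agree only modulo a structure-preserving $O(\epsilon^2)$ term, i.e. ``the same'' in the lemma statement really means ``the same linear part, and a nonlinearity still satisfying the Malkin conditions, so that Theorem~\ref{lemma dirscrete Malkin} applies'' — the paper proves precisely this weaker statement while phrasing the lemma as literal equality.
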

\begin{proof}
Consider the Malkin structure in \eqref{eq:malkin_structure}.  Let $\psi(r)=p$ be the inverse transformation to $r=\phi(p)$. All points $\rho=0$ are equilibrium points, and therefore all points $p=\psi(\theta,0)$ are equilibrium points of the equation for $p$. It follows that $f(\psi(\theta,0))=0$. Now consider the following discretisation of the differential equation for $p$:
\begin{equation}
p_{k+1}=p_k+\epsilon f(p_k)
\end{equation}
with sufficiently small $\epsilon$. Under the mapping $r=\phi(p)$, with $J_{\phi}$ the Jacobian of $\phi(p)$, we have
\begin{equation}\label{eq:Malkind}
\begin{split}
r_{k+1}&=\phi(p_k+\epsilon f(p_k))=\phi(p_k)+\epsilon J_{\phi}f(p_k)+o(\epsilon)\\
&=r_k+\epsilon J_{\phi}f(p_k)+o(\epsilon)
\end{split}
\end{equation}
where $o(\epsilon)$ denotes higher order terms of $\epsilon$. We must show this is of a Malkin form. It is straightforward to conclude that the linear part of the discrete time equation is of a Malkin form. In order to show that the nonlinear part also has this property, what we must show is that if $\rho_k=0$, then $r_{k+1}=r_k$. This will happen if and only if the higher order terms on the right of the difference equation go to zero when $\rho_k=0.$  Accordingly, suppose $\rho_k=0$. Then we know that $p_k=\psi(\theta_k,0)$ is an equilibrium point of the differential equation for $p$, and so $f(p_k)=0$. It follows that the difference equation for which
\begin{equation}
r_{k+1}=\phi(p_k+\epsilon f(p_k))
\end{equation}
actually has $r_{k+1}=\phi(p_k+0)=\phi(p_k)=r_k$. Hence in (\ref{eq:Malkind}),  the remainder terms of higher order in $\epsilon$ all go to zero when $\rho_k$ goes to zero, therefore we have completed our proof.
\end{proof}

\bibliographystyle{IEEEtran}
\bibliography{reference}
\end{document}